\documentclass[11pt,a4paper]{article}
\usepackage[T1]{fontenc}
\usepackage[utf8]{inputenc}
\usepackage{lmodern,amsmath,amssymb,amsthm,mathtools,microtype}
\usepackage[a4paper,margin=25mm]{geometry}
\usepackage{enumitem,booktabs,tikz,aliascnt}
\usetikzlibrary{arrows.meta,positioning}
\usepackage{caption,authblk}
\usepackage[hidelinks]{hyperref}
\usepackage[nameinlink,noabbrev,capitalise]{cleveref}
\crefname{appendix}{Appendix}{Appendices}
\Crefname{appendix}{Appendix}{Appendices}
\setlist{itemsep=1pt,topsep=4pt,parsep=0pt,leftmargin=2em}
\allowdisplaybreaks[1]
\numberwithin{equation}{section}
\newtheorem{theorem}{Theorem}[section]
\crefname{theorem}{Theorem}{Theorems}
\newaliascnt{lemma}{theorem}
\newtheorem{lemma}[lemma]{Lemma}
\aliascntresetthe{lemma}
\newaliascnt{proposition}{theorem}

\aliascntresetthe{proposition}
\newaliascnt{corollary}{theorem}

\aliascntresetthe{corollary}
\DeclareMathOperator{\Var}{Var}
\DeclareMathOperator{\bit}{bit}
\newcommand{\E}{\mathbb E}
\newcommand{\Pp}{\mathbb P}
\newcommand{\1}{\mathbf 1}
\newcommand{\cE}{\mathcal E}
\newcommand{\TV}{\mathrm{TV}}
\newcommand{\enc}{T_{\mathrm{enc}}}
\newcommand{\Cp}{C_{\mathrm{pair}}}
\newcommand{\Jout}{J_{\mathrm{out}}}
\newcommand{\Bflat}{B^{\mathrm{flat}}}
\newcommand{\norm}[1]{\lVert #1\rVert}

\newcommand{\doi}[1]{\href{https://doi.org/#1}{\nolinkurl{doi:#1}}}
\hypersetup{pdftitle={A simplified proof of the FPRAS for counting perfect matchings},pdfauthor={Chenghua Liu, Boning Meng},pdfsubject={Cycle exchanges, marked ultrametric completion, and approximate counting}}
\title{A simplified proof of the FPRAS for counting perfect matchings}
\author[1]{Chenghua Liu}
\author[2]{Boning Meng}
\affil[1]{Institute of Software, Chinese Academy of Sciences, Beijing, China}
\affil[2]{University of Regensburg, Regensburg, Germany}
\affil[ ]{\href{mailto:liuch.russell@gmail.com}{\texttt{liuch.russell@gmail.com}},
\href{mailto:mengboning2013@gmail.com}{\texttt{mengboning2013@gmail.com}}}
\date{}
\begin{document}
\makeatletter
\begin{center}
{\LARGE\@title\par}
\medskip
{\large\@author\par}
\end{center}
\makeatother
\begin{abstract}
We give a simplified, self-contained proof of OpenAI's FPRAS for counting perfect matchings in general graphs. As in the original proof, we use a restricted Poincar\'e inequality for two matchings to sample weighted perfect matchings and estimate their total weight. We simplify the proof of this inequality by isolating parts of alternating cycles directly on the original vertex set and controlling the resulting error terms. This avoids the auxiliary vertices and tree constructions used in the original proof. The argument preserves the required parameter bounds and yields the same FPRAS guarantee.
\end{abstract}

\section{Introduction}\label{sec:intro}
A perfect matching of a graph is a set of edges that covers every vertex exactly once. Finding one and counting all of them have very different complexity. Edmonds~\cite{Edmonds} gave a polynomial-time algorithm for maximum matching in general graphs, whereas Valiant~\cite{Valiant} proved that counting perfect matchings is \#P-complete even in bipartite graphs. In the bipartite case, the count is the permanent of the bipartite adjacency matrix. Planar graphs admit polynomial-time exact counting through Kasteleyn's Pfaffian method~\cite{Kasteleyn}. For unrestricted graphs, can the count be approximated to prescribed relative error in randomized polynomial time?

Approximate counting is closely tied to almost-uniform sampling through the reductions of Jerrum, Valiant, and Vazirani~\cite{JVV} for self-reducible counting problems. Jerrum and Sinclair~\cite{JS} gave a fully polynomial randomized approximation scheme (FPRAS) for counting all matchings in an arbitrary graph, summed over all cardinalities. For counting perfect matchings, their method also allows near-perfect matchings, which leave exactly two vertices unmatched. Its guarantee requires a polynomial bound on the ratio of near-perfect to perfect matchings. Otherwise, rejection from the enlarged state space may succeed with exponentially small probability. Jerrum, Sinclair, and Vigoda~\cite{JSV} removed this restriction for bipartite graphs, obtaining an FPRAS for the permanent of every nonnegative matrix. They reweighted the near-perfect classes according to their unmatched vertices and learned these weights during annealing. Bez\'akov\'a, \v{S}tefankovi\v{c}, Vazirani, and Vigoda~\cite{BSVV} subsequently improved the cooling schedule and mixing analysis.

The extension to general graphs encounters concrete obstructions. \v{S}tefankovi\v{c}, Vigoda, and Wilmes~\cite[full version, Theorem 2.2]{SVW} constructed graphs on which every chain with the transitions in~\cite{JSV} and weights depending only on the unmatched vertices either mixes exponentially slowly or assigns exponentially small stationary mass to perfect matchings. Reweighting these classes alone therefore cannot repair the chain. Blossoms, the odd-cycle structures used in Edmonds' algorithm, do not give an immediate remedy: uniformly sampling all blossoms through a specified unmatched vertex is intractable unless $\mathrm{NP}=\mathrm{RP}$, while restricting to minimum blossoms leaves the bottleneck~\cite[full version, Section 3]{SVW}. Their recursion based on the Gallai--Edmonds decomposition handles further structured cases and the slow-mixing examples, but can take exponential time in general~\cite[full version, Section 4]{SVW}.

Restricting the state space to perfect matchings raises a different difficulty. Two-edge switches are not even irreducible on a chordless even cycle of length at least six. On quasimonotone graphs, Dyer and M\"uller~\cite{DM} proved rapid mixing of the two-edge switch chain, but they also exhibited exponentially slow mixing on examples in a larger hereditary class. A sampler based on two-edge switches therefore needs both a connectivity argument and a mixing bound.

Other approaches obtain approximation guarantees under additional assumptions on the graph or its edge weights. When a graph is regular and has sufficiently strong expansion, Ebrahimnejad, Nagda, and Oveis Gharan~\cite{ENO} bound the ratio of near-perfect to perfect matchings. When its degree is also bounded, this yields a polynomial-time sampler. Analytic methods provide a different route through hafnians, the weighted perfect-matching sums for general graphs. When all off-diagonal entries lie in $[\theta,1]$ for fixed $\theta>0$, Barvinok~\cite{Barvinok} gives deterministic quasipolynomial-time relative approximation. When the support has minimum degree at least $(1/2+\gamma)n$ and nonzero weights lie in $[\theta,1]$, with fixed $0<\gamma<1/2$ and $0<\theta\le1$, Yi~\cite{Yi} gives a deterministic fully polynomial-time approximation scheme. An FPRAS for general graphs must also handle arbitrary sparse graphs without any expansion assumption.

Mixing analyses also developed beyond the original matching chains. Sinclair's multicommodity-flow bounds~\cite{Sinclair} relate mixing to the load placed on chain transitions. In the bipartite setting, recent improvements for the permanent use restricted Poincar\'e inequalities and energy-based flows~\cite{CVY,CGVY}. For arbitrary graphs, OpenAI's September 2026 manuscript~\cite{OpenAI} obtains an FPRAS for counting perfect matchings by combining a two-matching energy comparison with replication, completion, and marked-edge estimators. We give a simplified proof of their result, including its worst-case bit guarantee.

\begin{theorem}[OpenAI~\cite{OpenAI}]\label{thm:main}
There is a randomized algorithm which, given a finite simple graph $G$ and rational $0<\varepsilon,\delta<1$, returns a nonnegative rational $\widehat Z$ such that
\[
 \Pp\bigl[(1-\varepsilon)Z(G)\le\widehat Z\le(1+\varepsilon)Z(G)\bigr]\ge1-\delta,
\]
where $Z(G)$ is the number of perfect matchings of $G$. Its worst-case bit running time and number of fair random bits are polynomial in the total input length, $\varepsilon^{-1}$, and $\log(1/\delta)$.
\end{theorem}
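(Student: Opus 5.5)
The plan follows the standard route from sampling to counting, with a chain on pairs of perfect matchings in place of the Jerrum--Sinclair chain on near-perfect matchings. First we reduce to an annealing scheme. Fix a perfect matching $M_0$ of $G$, found with Edmonds' algorithm; if none exists, output $\widehat Z=0$. Give edges of $M_0$ weight $1$ and all other edges weight $\lambda$, and let $Z_\lambda=\sum_M\lambda^{|M\setminus M_0|}$. Then $Z_0=1$ and $Z_1=Z(G)$. The cooling schedule $\lambda_0=0<\lambda_1<\dots<\lambda_k=1$ is chosen so that consecutive ratios $Z_{\lambda_{i}}/Z_{\lambda_{i+1}}$ are at least $1/\mathrm{poly}$. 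We can take $\lambda_{i+1}=\lambda_i(1+1/n)$ starting from $\lambda_1=n^{-2}$, which gives $k=O(n\log n)$. Each ratio is the expectation of $(\lambda_i/\lambda_{i+1})^{|M\setminus M_0|}$ under the $\lambda_{i+1}$-weighted distribution, a bounded estimator. A product of $k$ sample averages, each to relative error $\varepsilon/(2k)$, therefore gives the $(1\pm\varepsilon)$ guarantee by the usual variance/Chebyshev argument. A median of $O(\log(1/\delta))$ independent runs then boosts the confidence to $1-\delta$. Everything reduces to sampling from $\pi_\lambda(M)\propto\lambda^{|M\setminus M_0|}$ within small total variation in polynomial time.

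For sampling we run a Markov chain on ordered pairs $(M,M')$ of perfect matchings with product weight $\pi_\lambda(M)\pi_\lambda(M')$. A move exchanges $M$ and $M'$ along one alternating cycle of $M\triangle M'$, or along a short segment of such a cycle closed by a two-edge switch inside one matching. We take Metropolis acceptance, and the marginal of the first coordinate is then $\pi_\lambda$. Rapid mixing comes from a restricted Poincar\'e inequality. The Dirichlet form of the pair chain should dominate $1/\mathrm{poly}(n)$ times the variance of functions of the first coordinate. We bound the variance through the complete ``exchange whole cycles'' dynamics and compare that to local moves. The paper's simplification enters here: a long alternating cycle $C$ is broken into pieces isolated directly on the original vertex set. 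We exchange a contiguous arc of $C$, where the two ends of the arc are rejoined by an edge of $G$ if one exists; otherwise the arc is charged to an error term. Telescoping along the cycle then expresses a full cycle exchange as a product of polynomially many local moves. By Cauchy--Schwarz the energy cost is polynomial, while the error terms are absorbed using that each pair state is hit by only polynomially many canonical paths. The restricted Poincar\'e inequality gives a spectral gap for the projected dynamics. A warm start, namely $M_0$ paired with $M_0$ at $\lambda$ taken from the previous annealing stage, then yields mixing in polynomially many steps.

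Two further steps handle completion and bit complexity. Marked-edge/replication estimators convert the pair stationary measure into unbiased estimators of the ratios; where the pair chain's stationary mass on the restricted set is not directly $\pi_\lambda\otimes\pi_\lambda$, this corrects the measure. For bit complexity, we replace $\lambda_i$ by rationals of polynomial bit length, losing only $1+o(\varepsilon/k)$ factors in each ratio. Metropolis acceptance probabilities are simulated with fair coins by comparing $O(\log(nk/\varepsilon))$-bit truncations, and the $\delta$-probability failure event from truncation is folded into the final confidence. All of this gives polynomial worst-case time and random bits; any run exceeding the budget is aborted with output $0$, which costs only a $\delta/2$ failure probability.

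The main obstacle is the restricted Poincar\'e inequality itself. Specifically, we must show that the error terms from arcs whose endpoints are not adjacent in $G$ are controlled; this is exactly where \cite{SVW}-type bottlenecks must be avoided. My first attempt is to choose the cut points of each cycle adaptively. Each isolated arc has both ends on $M$-edges, so closing it requires only swapping $M$ and $M'$ on the arc, and the result is again a valid pair of perfect matchings on the original vertex set. This makes the "error" a ratio of weights bounded by $\lambda^{-O(1)}\le\mathrm{poly}(n)$ along the schedule. The remaining difficulty is bounding the congestion of these canonical paths by an injective encoding, which recovers $(M,M')$ from a transition plus a polynomially sized advice string.
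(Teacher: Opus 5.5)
\textbf{There is a genuine gap: the sampling step fails.} Your annealing-to-sampling reduction is fine in outline. The problem is that you run the pair chain on perfect matchings of $G$ itself, and there the restricted Poincar\'e inequality is false. Take $G$ a chordless hexagon with its two perfect matchings $P,Q$. No two-edge switch stays inside $G$, so the only moves are the exchanges $(P,Q)\leftrightarrow(Q,P)$, and $(P,P)$ is isolated. With $f=g=\1_P$, the function $F=f(M_1)+g(M_2)$ is invariant under every move. Hence $\cE(F)=0$ while $\Var_\pi(f)>0$.

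The same defect breaks your isolation step. Swapping $M,M'$ on an arc of an alternating cycle leaves the arc's endpoints uncovered in one matching and doubly covered in the other. Getting a valid pair needs the switch $xi+jy\mapsto ij+xy$, whose chords $ij,xy$ are typically nonedges of $G$. So missing chords are not an error term you can charge to congestion; they are precisely the \cite{SVW}-type obstruction.

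\textbf{Missing idea: completion on $K_n$.} The paper samples from activities $\mu=\lambda+B/D$ on $K_n$, where $B$ is the bottleneck ultrametric of~\eqref{eq:bottleneck}, so that $B/D\le\mu\le(D+1)B/D$. Every chord then exists with controlled weight. An isolating switch cannot decrease edge counts inside threshold clusters, so each isolation loses at most $r^2$, uniformly in depth, as in~\eqref{eq:isolation-weight}.

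Because completion changes the partition function, the paper also needs three further ingredients.
\begin{enumerate}
\item It anneals on $K_n$, from all weights one down to nonedge weights $b^{-j}$.
\item It keeps the activities balanced by dyadic vertex scaling at every stage. Balance is exactly what gives $Z_\mu/Z_\lambda<2$ in \cref{lem:inflation}.
\item It marks sampled edges as real or virtual. All-real and single-virtual samples then estimate both $R_j$ and the two-hole ratios used for rebalancing.
\end{enumerate}
Your annealing from a fixed $M_0$ has no balance control. Completing your $\lambda$-weights could inflate $Z$ exponentially, swamping $Z(G)$.

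\textbf{Second gap: the inequality alone does not give mixing.} Even granted, it bounds $\Var_\pi(f)$ only for functions of the first coordinate, against the energy of additive $F$. That is not a spectral gap for the pair chain. There is also no ``projected dynamics'' to apply it to, since the first coordinate alone is not Markov. The paper converts the additive bound into a genuine gap via \cref{lem:replication}. It uses a path of $T$ nearby laws from the uniform $\pi_0$ to the target, with $q=32\Cp$ replicas per level and pair updates between adjacent levels, giving gap at least $3/(8J)$. Replication serves mixing, not estimation, and a warm start from the previous annealing stage does not substitute for it.
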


The simplification concerns the comparison argument in Sections 4--6 of~\cite{OpenAI}. We carry it out directly on the original vertex set and retain OpenAI's framework for converting the comparison into sampling and counting algorithms. We give the required arguments below.

To describe this framework, let $n$ be even and assign each edge $e$ of the complete graph $K_n$ a positive weight $a_e$, also called its activity. A perfect matching has weight equal to the product of its edge weights. Its sampling probability is proportional to this product:
\begin{equation}\label{eq:notation}
 w_a(M)=\prod_{e\in M}a_e,\qquad Z_a=\sum_Mw_a(M),\qquad
 \pi_a(M)=w_a(M)/Z_a,
\end{equation}
where $M$ ranges over perfect matchings and $Z_a$, the total weight, is the partition function. The counting algorithm starts with every edge weight equal to one. The initial total weight can then be computed directly. The algorithm gradually lowers the weights of edges absent from $G$, keeping the weights of graph edges equal to one. This procedure is called annealing. At the end, the total weight differs from $Z(G)$ by the small contribution of matchings using nonedges. Each successive ratio of total weights is estimated by averaging the multiplicative changes in matching weights over samples from the current distribution. Multiplying the estimated ratios by the known initial total weight gives the final estimate. The main algorithmic task is therefore to sample the weighted matching distributions along this sequence.

Sampling must remain efficient even when edge weights differ greatly. OpenAI's completion~\cite[proof of Theorem 3.1]{OpenAI} uses the following hierarchy. Keep only edges above a chosen weight threshold. As the threshold decreases, connected components merge into larger groups, forming a hierarchy. OpenAI use these connection thresholds to assign auxiliary weights to vertex pairs. Adding these to the original weights gives completed weights that are within a polynomial factor of hierarchical reference weights. They also show how to apply this completion while increasing the total matching weight by at most a constant factor.

For a distribution with this structure, consider two independent matchings $M_1,M_2\sim\pi_a$. Their differing edges form disjoint even cycles, with edges alternating between the two matchings. The comparison uses two operations: a \emph{two-edge switch} replaces two edges in $M_1$ by another pairing of their endpoints, and a \emph{cycle exchange} swaps the alternating patterns of $M_1$ and $M_2$ on an entire cycle. These operations are combined into a reversible random update. For a numerical quantity on matchings, we compare its overall fluctuation, measured by variance, with its mean squared change in one update. The restricted Poincar\'e inequality in \Cref{thm:pair} provides the comparison needed for the sampling construction. Its constant depends polynomially only on the number of vertices and the largest ratio of an edge weight to its corresponding reference weight, even when the reference weights differ greatly.

To prove \Cref{thm:pair}, we first reduce the variance bound to controlling changes in the chosen quantity under cycle exchanges. We call these changes cycle gains. We view each cycle as a polygon and divide it along the weight hierarchy. Adapting OpenAI's identity~\cite[Lemma 6.3]{OpenAI}, we express its gain through changes under two-edge switches and gains on smaller cycles, with errors from different exterior matchings. OpenAI's proof introduces auxiliary vertices for this decomposition. We instead isolate each smaller region directly from the original matching pair by one two-edge switch, so that a cycle exchange can act on it alone. The resulting weight loss has the same bound at every depth, and the construction uses only the original vertices.

We apply the same gain identity to each smaller cycle. Its gain appears in the larger cycle's expansion and on the left-hand side of its own identity. Direct isolation can change the exterior matching and hence the gain. Combining the identities leaves the difference between the two gains as an error. Subtracting the same mean from both gains expresses this error as a difference of centered gains. Cycle exchanges control these errors after summing over matching pairs. This adapts OpenAI's use of centering to control exterior errors~\cite[Section 6.4]{OpenAI}.

The expansion contains two-edge switch gains and centered gains controlled by cycle exchanges. Summing either type requires bounds on repeated contributions from different matching pairs. Adapting OpenAI's encoding~\cite[Lemma 6.4]{OpenAI}, we encode each contribution by an ordered matching pair for weight comparison and an auxiliary record. These data recover the original pair and the chosen expansion term. Since the record takes polynomially many values, only polynomially many contributions can give the same output pair. Comparing input and output weights then bounds their total probability mass. For a centered gain, the output pair contains the two alternating patterns on its cycle, allowing cycle exchanges to control the errors after summation. Finally, Cauchy--Schwarz bounds the square of the full gain expansion by a polynomial factor times the sum of squared terms. These estimates prove \cref{thm:pair}.

OpenAI's replication construction uses joint matching updates to turn the comparison into a polynomial-time approximate sampler for the completed weights. To relate these samples to the weights before completion, OpenAI mark sampled edges as real or virtual, distinguishing original and auxiliary weight contributions. The marked samples estimate the successive total-weight ratios used in annealing, completing the counting algorithm~\cite[Section 8.2]{OpenAI}.

We begin in \cref{sec:comparison} by proving the comparison under the hierarchical weight assumption. \Cref{sec:sampling} uses it to construct a sampler. \Cref{sec:completion} gives the completion and controls its increase in total matching weight, then applies the sampler to obtain marked matchings. \Cref{sec:counting} combines these ingredients to prove \cref{thm:main}. \Cref{app:implementation} supplies a finite-precision implementation with worst-case polynomial bounds on bit running time and random-bit use.

\section{A two-matching comparison}\label{sec:comparison}
We first prove the sampling comparison under the hierarchical weight assumption from the introduction. The completion in \cref{sec:completion} will verify this assumption.

Throughout, $\log$ denotes the natural logarithm. Let $\Omega$ be the set of perfect matchings of $K_n$ for even $n\ge4$, and $\pi=\pi_a$ the law in~\eqref{eq:notation} for positive activities $a$. Take independent $M_1,M_2\sim\pi$. Their differing edges form vertex-disjoint alternating even cycles, called \emph{discrepancy cycles}. Each has at least four vertices, so there are at most $K=\lfloor n/4\rfloor$.

\begin{samepage}
We formalize the hierarchical reference weights as a positive \emph{ultrametric similarity} $B$ on $K_n$: $B_{ij}\ge\min\{B_{ik},B_{kj}\}$ for distinct $i,j,k$. If $B_{ik},B_{kj}\ge t$, then $B_{ij}\ge t$. Edges with $B_e\ge t$ form disjoint cliques merging as $t$ decreases. Here $B\le a\le rB$ means $B_e\le a_e\le rB_e$ for all $e$.

\end{samepage}

Define a Markov chain on $\Omega\times\Omega$ with transition kernel $\mathsf K$. With probabilities $1/3$ each, propose a two-edge switch in $M_1$, hold, or propose a cycle exchange on a uniformly chosen discrepancy cycle of $(M_1,M_2)$. A two-edge switch chooses two distinct edges of $M_1$ uniformly and either alternative pairing with probability $1/2$. Accept the proposed matching $M'_1$ with Metropolis probability $\min\{1,\pi(M'_1)/\pi(M_1)\}$, and otherwise hold. Cycle exchanges are always accepted because they preserve the product of the two matching weights, and hence $\pi(M_1)\pi(M_2)$. If no discrepancy cycle exists, the cycle exchange branch holds. The proposals are symmetric, so $\mathsf K$ is reversible with stationary law $\pi\otimes\pi$.

For a reversible kernel $P$ with stationary law $\sigma$ and a real function $F$ on its state space, define
\[
 \cE_{\sigma,P}(F)=\frac12\sum_{x,y}\sigma(x)P(x,y)[F(x)-F(y)]^2.
\]
This is half the expected squared change in $F$ after one step at stationarity. Write $\cE=\cE_{\pi\otimes\pi,\mathsf K}$.

The following theorem plays the role of OpenAI's pair energy inequality~\cite[Theorem 6.1]{OpenAI}. Our proof uses direct isolation on the original vertex set, avoiding the edge subdivision and colored tree-bag analysis in OpenAI's proof.

\begin{theorem}\label{thm:pair}
Suppose $B\le a\le rB$, where $r\ge1$ and $B$ is a positive ultrametric similarity. Then, for all $f,g:\Omega\to\mathbb R$,
\begin{equation*}
 \Var_\pi(f)\le\Cp\,\cE\bigl(f(M_1)+g(M_2)\bigr),
 \qquad \Cp=\left\lceil(10n)^{80}r^9\right\rceil.
\end{equation*}
\end{theorem}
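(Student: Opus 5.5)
We prove \cref{thm:pair} in three stages, following the outline in the introduction.

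\textbf{Reduction to cycle gains.} Write $F(M_1,M_2)=f(M_1)+g(M_2)$ and use $\Var_\pi(f)=\tfrac12\E[(f(M_1)-f(M_2))^2]$ for independent $M_1,M_2\sim\pi$. Let $C_1,\dots,C_k$ with $k\le K$ be the discrepancy cycles of $(M_1,M_2)$. Exchanging all of them in turn carries $M_1$ to $M_2$. Hence $f(M_1)-f(M_2)$ telescopes into at most $K$ \emph{cycle gains}
\[
\Delta_C(M_1,M_2)=f(M_1)-f(M_1\oplus C),
\]
each taken at an intermediate pair. A cycle exchange on $C$ changes $F$ by
\[
f(M_1)-f(M_1\oplus C)+g(M_2)-g(M_2\oplus C).
\]
The $g$-part is not controlled directly, so I will center. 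Let $\bar\Delta_C$ denote the mean of the gain over pairs with the same cycle pattern. Writing $\Delta_C=\bar\Delta_C+(\Delta_C-\bar\Delta_C)$, the centered parts are bounded by cycle-exchange energy, using that $\pi\otimes\pi$ is invariant under swapping the patterns on $C$. Cauchy--Schwarz over the $\le K$ cycles then reduces the theorem to a bound
\[
\E\bigl[\Delta_C^2\bigr]\le \mathrm{poly}(n)\,r^{O(1)}\,\cE(F)
\]
for a single cycle gain, summed over the cycles.

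\textbf{Hierarchical decomposition of a cycle.} Fix a discrepancy cycle $C$ and view it as a polygon. Its vertices are clustered by the ultrametric $B$: take the largest threshold $t$ at which $C$'s vertices do not all lie in one clique of $\{B_e\ge t\}$. This splits the polygon into arcs lying in distinct clusters. Choose two vertices $u,v$ on $C$ joining a large arc, and perform one two-edge switch in $M_1$ that closes this arc into a smaller alternating cycle $C'$ and a complementary cycle $C''$, using only original vertices. Because $B$ is ultrametric and $B\le a\le rB$, the new edges have weight at least $r^{-2}$ times the product of the removed edges, relative to the pairing forced by the hierarchy. The isolation therefore loses a factor of at most $r^{2}$ in $\pi\otimes\pi$ mass at each step. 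This loss does not compound with depth, since each isolation starts from the original pair. The result is an identity, adapted from \cite[Lemma 6.3]{OpenAI}, of the form
\[
\Delta_C=\sum(\text{switch gains})+\Delta_{C'}+\Delta_{C''}+\text{error},
\]
where the error is the difference between gains of the same subcycle under two exterior matchings. I will iterate the identity down to cycles of length $4$, whose exchange is itself a two-edge switch. The depth is at most $O(\log n)$ if arcs are balanced, or at most $n$ in general; Cauchy--Schwarz costs only a polynomial factor either way. The error terms are rewritten as differences of centered gains, which are again controlled by cycle-exchange energy.

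\textbf{Encoding and summation; the main obstacle.} Each term in the expansion is a squared change of $F$ along a switch or a cycle exchange at some pair $(M_1',M_2')$ obtained from $(M_1,M_2)$. To bound the sum of terms by $\cE(F)$, I will construct an injection-up-to-polynomial map, adapting \cite[Lemma 6.4]{OpenAI}. It sends the data $(M_1,M_2,\text{term})$ to an output pair together with a record of polynomial size: cycle endpoints, the switch vertices, and the depth index. The weight comparison $\pi(M_1)\pi(M_2)\le r^{O(1)}\,\pi(M_1')\pi(M_2')$ then follows from the per-step $r^2$ loss. I expect this step to be the main obstacle. The output pair must keep both alternating patterns of the relevant cycle, so that centered errors can be compared via cycle exchanges, while the record must still recover the original pair. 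Achieving this without the auxiliary vertices of \cite{OpenAI} requires careful bookkeeping of which two-edge switch performed each isolation. Multiplying the polynomial multiplicity, the $r^{O(1)}$ weight loss, and the Cauchy--Schwarz factors, and tracking exponents generously, should give the constant $\Cp=\lceil(10n)^{80}r^9\rceil$.
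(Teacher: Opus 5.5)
You follow the paper's strategy: reduce to cycle gains, isolate by one switch from the original pair so that the $r^2$ loss does not compound, use the cell identity of \cite[Lemma 6.3]{OpenAI}, center the exterior errors, and encode with a guide that keeps both patterns. Your telescoping reduction costs an extra factor $K$ but is an acceptable substitute for the paper's variance tensorization.

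The decomposition step, however, fails as stated. An arc with an odd number of vertices has endpoints of equal parity, so no chord closes it into an alternating cycle. All top-level arcs can be odd: for a $6$-cycle with $B_{23}=B_{50}$ minimal, the runs $\{0,1,2\}$ and $\{3,4,5\}$ are both odd.

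The paper never isolates odd clusters. It makes three choices.
\begin{enumerate}
\item It replaces $B$ on $V(c)$ by the path ultrametric $\Bflat$. This matters because threshold cliques of $B$ itself need not be contiguous on $c$, so your arcs need not lie in distinct clusters.
\item It takes basins to be the even nodes of $\mathcal T$.
\item When a basin has two odd children, it cuts along the chord joining their last vertices. It then peels each polygon $(a,O)$ into quadrilateral cells with alternating corner parities and a short $G$ arc, plus even polygon holes.
\end{enumerate}
Summing the cell identities, the two complementary arc gains at each shared diagonal sum to $d_E(R)$ and cancel the surplus copies. Only hole gains, which are gains of even tree nodes, remain, and each is again isolated by one switch from the original pair. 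A recursion on your complementary cycle $C''$ lacks this property: $C''$ contains the new chord $xy$, so its sub-arcs through $xy$ cannot be isolated from the original pair by one switch.

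The weight comparisons also do not follow from the isolation loss. Take a cell cut from odd children, write $b=\Bflat$, and consider the pair $(A_G,Q_{\{Y,X\}})$ before guide repair. Its $b$-product is smaller than the source's by
\[
 \gamma=\frac{b_{dx}\,b_{ya}}{b_{da}\,b_{xy}},
\]
which is not bounded in terms of $n,r$. For example, on an $8$-cycle with $B_{70}=B_{23}$ minimal and $B_{01}<B_{12}$, the cell $(7,0,1,2)$ gives $\gamma=B_{67}/\min\{B_{34},B_{45},B_{56}\}$. The paper repairs the guide by one more switch outside $X$, which multiplies the guide's weight by exactly $\gamma$ and keeps $C_X$. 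Without some such device the loads are not polynomially bounded.

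The encoding you leave as the main obstacle is handled by recording edits to the multiset union. Cycle exchanges preserve $M_1\uplus M_2$, while isolation, the cell patterns and guide repair change at most nine edge occurrences. Seven vertex slots, a type, four flags and two padded lists of deleted and added edges therefore recover the source and the term. This gives at most $\enc$ occurrences per output pair, and the same occurrence count produces the factors $r^9$ and $r^7$. Summing centered loads over guides $Q\supseteq C_X$ and matchings $M\supseteq T_X$ then yields $\theta_cv_c$, hence $V(f)$.

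Finally, the bound $\E(F-F^c)^2\ge v_c$ rests on the conditional independence of $M_1\setminus P_c$ and $M_2\setminus Q_c$ given the patterns. Invariance under swapping the patterns is not enough on its own to remove the $g$-contribution.
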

Changes in $g(M_2)$ may offset changes in $f(M_1)$ during cycle exchanges. The theorem still bounds $\Var_\pi(f)$ by the total energy for every $g$, with a constant polynomial only in $n$ and $r$. This form is used in the sampling construction of \cref{sec:sampling}.

We first bound the variance by the expected sum of squared changes in $f$ under cycle exchanges. We then bound this sum by the energies of two-edge switches and cycle exchanges.

\subsection{Cycle gains and cycle exchange energy}\label{sec:cycle-gains}
A cycle exchange can have different gains for different matchings outside the cycle. For each geometric even cycle $c$ of $K_n$ (a simple cycle viewed as an edge set), label its two alternating edge sets $P_c,Q_c$. Let $\pi_c$ be the matching law with activities $a$ on the vertices outside $c$, with weight one for the empty matching. For a perfect matching $U$ on these vertices, define
\begin{equation}\label{eq:cycle-data}
 \begin{gathered}
 d_c(U)=f(P_c\cup U)-f(Q_c\cup U),\qquad v_c=\Var_{\pi_c}(d_c),\\
 \theta_c=\pi(P_c\subseteq M)\pi(Q_c\subseteq M),
 \qquad V(f)=2\sum_c\theta_cv_c.
 \end{gathered}
\end{equation}
Here $v_c$ measures variation of the gain with the exterior matching, and $\theta_c$ is the probability that $P_c\subseteq M_1$ and $Q_c\subseteq M_2$. The factor two in $V(f)$ accounts for the reverse assignment. Unrestricted sums over $c$ count each geometric even cycle once.

Write $(M_1^c,M_2^c)$ for the pair obtained by a cycle exchange on a discrepancy cycle $c$ of $(M_1,M_2)$. The expected sum of squared gains satisfies
\begin{equation}\label{eq:cycle-square}
 2\Var_\pi(f)\le S(f):=
 \E_{\pi\otimes\pi}\sum_{c\text{ discrepancy cycle}}[f(M_1)-f(M_1^c)]^2
 =2\sum_c\theta_c\,\E_{\pi_c}d_c^2.
\end{equation}
For the inequality, condition on the multiset union $\mathcal U=M_1\uplus M_2$, counting common edges twice. Given $\mathcal U$, the assignments of the two alternating patterns on each discrepancy cycle to $M_1,M_2$ are independent fair bits, since every assignment preserves $\pi(M_1)\pi(M_2)$. The conditional laws of $f(M_1)$ and $f(M_2)$ agree. Bounding the squared difference by twice the sum of the centered squares therefore gives a factor of $4$. For each bit with the others fixed, the conditional variance of $f(M_1)$ is one quarter of its squared change. Variance tensorization cancels these factors, giving
\[
 \E[(f(M_1)-f(M_2))^2\mid\mathcal U]
 \le4\Var(f(M_1)\mid\mathcal U)
 \le\E\!\left[\sum_c(f(M_1)-f(M_1^c))^2\mid\mathcal U\right].
\]
Averaging gives the inequality in~\eqref{eq:cycle-square}.

For the equality, condition on $P_c\subseteq M_1,\ Q_c\subseteq M_2$, an event of probability $\theta_c$. The remainders $U=M_1\setminus P_c$ and $W=M_2\setminus Q_c$ are independent with law $\pi_c$, since the fixed pattern weights cancel upon conditioning. The reverse assignment contributes equally.

Let $q(M,M')$ be the two-edge switch proposal probability, including the factor $1/3$. Put $F(M_1,M_2)=f(M_1)+g(M_2)$ and $F^c=F(M_1^c,M_2^c)$. The energies of two-edge switches and cycle exchanges are respectively $L(f)$ and $C(F)$:
\begin{equation}\label{eq:pair-energy}
 \begin{aligned}
 \cE(F)&=L(f)+C(F),\\
 L(f)&=\sum_{\{M,M'\}\text{ two-edge switch}}
 q(M,M')\min\{\pi(M),\pi(M')\}[f(M)-f(M')]^2,\\
 C(F)&=\frac16\E_{\pi\otimes\pi}
 \left[\frac1k\sum_{c\text{ discrepancy cycle}}(F-F^c)^2\right].
 \end{aligned}
\end{equation}
Here $k$ is the number of discrepancy cycles, and the last integrand is zero when $k=0$.

Condition again on $P_c\subseteq M_1,\ Q_c\subseteq M_2$. Define $d_{g,c}$ by replacing $f$ with $g$ in~\eqref{eq:cycle-data}. Then $F-F^c=d_c(U)-d_{g,c}(W)$, and independence of $U,W\sim\pi_c$ gives
\[
 \E(F-F^c)^2
 =v_c+\Var_{\pi_c}(d_{g,c})
  +(\E_{\pi_c}d_c-\E_{\pi_c}d_{g,c})^2\ge v_c.
\]
The reverse assignment gives the same bound. Summing over cycles and using $k\le K$ yields
\begin{equation}\label{eq:exchange-control}
 V(f)\le6K C(F).
\end{equation}
By~\cref{eq:cycle-square,eq:pair-energy,eq:exchange-control}, it remains to bound $S(f)$ by a polynomial in $n,r$ times $L(f)+V(f)$. We do this by dissecting each discrepancy cycle and comparing gains on its smaller parts.

\subsection{A hierarchy and independent isolation}\label{sec:basins}
To bound $S(f)$, we first organize each discrepancy cycle into intervals that can be isolated with a uniform weight loss. Fix a source $d=(M_1,M_2,c)$ indexing a summand of $S(f)$, with mass $\rho_d=\pi(M_1)\pi(M_2)$. Choose a minimum-$B$ edge of $c$ as its closing edge and number the complementary path $0,\ldots,m-1$, where $m=|c|$. Resolve ties and direction by vertex labels. All intervals below follow this path. We may take $P_c=\{01,23,\ldots,(m-2)(m-1)\}$ in~\eqref{eq:cycle-data}.

To make the hierarchy follow the path, define symmetric weights on $V(c)$ by
\begin{equation*}
 \Bflat_{ij}=\min_{i\le t<j}B_{t,t+1},\qquad i<j.
\end{equation*}
These weights form an ultrametric similarity on $V(c)$ satisfying $\Bflat\le B|_{V(c)}$, with equality on the edges of $c$. Equality on the closing edge follows because a cycle cannot have a unique minimum edge weight under an ultrametric similarity. The components of the graph on $V(c)$ with edges $ij$ satisfying $\Bflat_{ij}\ge t$ are its \emph{threshold clusters}. These intervals form an inclusion tree with root $V(c)$, singleton leaves, and maximal proper subclusters as children. Each nonsingleton node $I$ has height $h(I)=\min_{e\in\binom I2}\Bflat_e$, where $\binom I2$ is the set of two-element subsets of $I$. In this tree, heights strictly increase towards nonsingleton children.

Form a binary tree $\mathcal T$ by merging the first two children, then their union with each remaining child in path order. The inserted unions, called \emph{intermediate nodes}, inherit the original parent's height and remain nested with or disjoint from threshold clusters. Write $p(I)$ for a nonroot node's parent.

For perfect matchings $P,Q$ on $V(c)$, let $N_I(P,Q)=|P\cap\binom I2|+|Q\cap\binom I2|$ be the total number of their edges inside $I$, with multiplicity. We have $\Bflat_{ij}=h(I)$ for the smallest node $I$ containing $i,j$. Successive log-height differences along its path from the root telescope to $\log\Bflat_{ij}-\log h(V(c))$. Summing over the $m$ edge occurrences by node gives, by~\eqref{eq:notation},
\[
 \log\bigl(w_{\Bflat}(P)w_{\Bflat}(Q)\bigr)
 =m\log h(V(c))+
 \sum_{\substack{I\in\mathcal T\\I\ne V(c),\ |I|\ge2}}
 N_I(P,Q)\log\frac{h(I)}{h(p(I))}.
\]
The coefficients are nonnegative and zero at intermediate nodes.

Closing an interval $E=[i,\ldots,j]$ of even size gives two alternating matchings on $E$:
\begin{equation*}
 T_E=\{i(i+1),\ldots,(j-1)j\},\qquad
 C_E=\{ij,(i+1)(i+2),\ldots,(j-2)(j-1)\}.
\end{equation*}
Starting at the root, the recursion selects even nodes of $\mathcal T$ of size at least four as \emph{basins}.

Isolate each basin directly from $(M_1,M_2)$, placing the matching containing $T_E$ first by interchanging the two whole matchings if necessary. For proper $E$, let $x=i-1,y=j+1$ on the original $c$-circle. Reconnect the second matching's two edges crossing $E$ by the two-edge switch
\begin{equation}\label{eq:isolation}
 xi+jy\longmapsto ij+xy.
\end{equation}
Edge-set sums denote multiset unions. For the root, do nothing. The resulting pair and the pair after a cycle exchange on $E$ are
\[
 Y_E=(T_E+R_E,C_E+S_E),\qquad
 Y_E^E=(C_E+R_E,T_E+S_E),
\]
where $R_E,S_E$ are perfect matchings on $[n]\setminus E$, which may differ.

The two-edge switch in~\eqref{eq:isolation} cannot decrease the number of edges inside any threshold cluster $I$. If $I\subseteq E$ or $I\cap E=\varnothing$, neither removed edge lies inside $I$. If $E\subseteq I$, the change is
$1+\1_I(x)\1_I(y)-\1_I(x)-\1_I(y)=(1-\1_I(x))(1-\1_I(y))\ge0$.
The tree and heights are unchanged, so the product formula above shows that this switch cannot decrease the product of the two matching weights on $V(c)$ under $\Bflat$.
Write $a_e=B_e\chi_e$, with $1\le\chi_e\le r$. The unchanged weights outside $V(c)$ cancel. Since $B=\Bflat$ on the discrepancy cycle and $B\ge\Bflat$ on added chords, the $B$-weight product cannot decrease. The removed factors satisfy $\chi_{xi}\chi_{jy}\le r^2$, while the new factors satisfy $\chi_{ij}\chi_{xy}\ge1$. Hence
\begin{equation}\label{eq:isolation-weight}
 \rho_d\le r^2\pi(Y_{E,1})\pi(Y_{E,2}),
\end{equation}
also after a cycle exchange on $E$. Every isolation starts from the original pair, so the factor $r^2$ does not accumulate with depth.

\subsection{Cell identities and centering}\label{sec:centering}
We bound $S(f)$ by cell identities, basin dissections, and centering to reconcile exterior matchings.

Fix a basin $E=[i,\ldots,j]$. Here $u\to v$ denotes the arc from $u$ to $v$ along the directed $E$-boundary $(i,i+1,\ldots,j,i)$. A quadrilateral with cyclically ordered corners $(a,b,c,d)$ is called a \emph{cell}. For the local identity, assume the corners have alternating index parities. The four intervening boundary paths are then \emph{odd arcs}, with odd numbers of edges and even numbers of vertices. Put $H=(a\to b,c\to d)=(Y,X)$ and $G=(b\to c,d\to a)$, and assume at least one $G$ arc is \emph{short}, meaning it has one edge. For an odd arc $J=(v_0,\ldots,v_s)$, define
\begin{equation*}
 T_J=\{v_0v_1,v_2v_3,\ldots,v_{s-1}v_s\},\quad
 N_J=\{v_1v_2,v_3v_4,\ldots,v_{s-2}v_{s-1}\},\quad
 C_J=N_J\cup\{v_0v_s\}.
\end{equation*}
$T_J,C_J$ match all arc vertices and $N_J$ only the interior. Short arcs have $T_J=C_J$ (\cref{fig:arc-patterns}).

\begin{figure}[htbp]
\centering
\begin{tikzpicture}[x=.43cm,y=.7cm,every node/.style={font=\footnotesize}]
\foreach \base/\lab in {0/{T_J},8/{N_J},16/{C_J}}{
 \node[font=\small] at ({\base+2.5},1.15) {$\lab$};
 \foreach \i in {0,...,5}{
  \coordinate (p\base-\i) at ({\base+\i},0);
  \fill (p\base-\i) circle (1.1pt);
  \node[below=3pt] at (p\base-\i) {$v_{\i}$};
 }
}
\foreach \a/\b in {0/1,2/3,4/5}{\draw[thick] (p0-\a) to[bend left=38] (p0-\b);}
\foreach \a/\b in {1/2,3/4}{\draw[thick] (p8-\a) to[bend left=38] (p8-\b);}
\foreach \a/\b in {1/2,3/4}{\draw[thick] (p16-\a) to[bend left=38] (p16-\b);}
\draw[thick] (p16-0) to[bend left=27] (p16-5);
\end{tikzpicture}
\caption{Patterns on an odd arc. $T_J,C_J$ match all its vertices, whereas $N_J$ matches only its interior.}\label{fig:arc-patterns}
\end{figure}

For any basin or odd arc $X$ and a matching $U$ on its complementary vertices, write $d_X(U)=f(T_X+U)-f(C_X+U)$. For the fixed basin $E$, take an exterior matching $R$ and set $O_0=T_E+R$, $O_1=C_E+R$. An arc $J$ is \emph{through} in $O_\epsilon$ when $T_J\subseteq O_\epsilon$. A pair is through when both arcs are. Since the four arcs are odd, their first edges alternate between $T_E$ and $C_E$. Thus $G$ is through in one of $O_0,O_1$ and $H$ in the other. For a through arc, put
$\Delta_J(R)=f(O_\epsilon)-f((O_\epsilon\setminus T_J)+C_J)$.

Set $N_G=N_{b\to c}\cup N_{d\to a}$. For $Z\subseteq\{Y,X\}$, put
\begin{equation}\label{eq:cell-patterns}
 M_Z=N_G+\sum_{J\in Z}C_J+\sum_{J\in\{Y,X\}\setminus Z}T_J.
\end{equation}
Each $M_Z+R$ is a perfect matching. Starting from $M_\varnothing+R$, in which $H$ is through, replace $T_Y$ by $C_Y$ and then $T_X$ by $C_X$ to obtain $M_{\{Y\}}+R$ and $M_{\{Y,X\}}+R$. On a nonshort arc, this is a cycle exchange along the arc and its endpoint chord.

Writing $N=N_G+N_Y+N_X$ for the matching on the four arc interiors, define
\[
 A_H=N+\{ab,cd\},\qquad A_G=N+\{bc,da\}.
\]
Here $A_H=M_{\{Y,X\}}$, and $A_G+R,A_H+R$ differ by a two-edge switch. Exchanging $Y$ changes the exterior seen by the subsequent exchange on $X$. The change in the $X$ gain is
\begin{equation}\label{eq:cell-error}
 E_H(R)=d_X((M_{\{Y\}}+R)\setminus T_X)
        -d_X((M_\varnothing+R)\setminus T_X).
\end{equation}
Thus exchanging both $H$ arcs has gain $\Delta_Y+\Delta_X+E_H$. The corresponding $G$ error is zero because $G$ has a short arc. Also $E_H=0$ if either $H$ arc is short. Put
$\ell_{\mathrm{cell}}(R)=f(A_G+R)-f(A_H+R)-E_H(R)$ when $G$ is through in $O_0$, and use its negative when $G$ is through in $O_1$. Comparing the exchanges on $G$ and $H$ gives the signed cell identity of~\cite[Lemma 6.3]{OpenAI}
\begin{equation}\label{eq:one-cell}
 d_E(R)=\ell_{\mathrm{cell}}(R)
       +\sum_{J\text{ through in }O_0}\Delta_J(R)
       -\sum_{J\text{ through in }O_1}\Delta_J(R).
\end{equation}
The sums range over the four arcs of this cell.

We now use $\mathcal T$ to select such cells, leaving even subpolygons as \emph{polygon holes} for recursion. A basin may yield several cells. The children of $E$ have the same parity. If both are even, say $[a,\ldots,b]$ and $[c,\ldots,d]$, select cell $(a,b,c,d)$, leaving both children as polygon holes. For odd children $I_1,I_2$, split $E$ along the chord joining their last vertices into $(\operatorname{last}(I_2),I_1)$ and $(\operatorname{last}(I_1),I_2)$. For an odd interval $O=[p,\ldots,q]$, $(a,O)$ denotes the polygon with cyclic vertex order $(a,p,\ldots,q)$, where $a$ immediately precedes $p$ on the $E$-boundary. A singleton $O$ gives an edge. For $|O|\ge3$, the children of $O$ in $\mathcal T$ split $(a,O)$ into a cell, a polygon hole on the even child, and a smaller even polygon on the odd child plus one vertex:
\begin{enumerate}
\item $[p,\ldots,s]$ odd, $[c,\ldots,q]$ even, $c=s+1$: select $(a,s,c,q)$ and continue with $(a,[p,\ldots,s])$.
\item $[p,\ldots,b]$ even, $[u,\ldots,q]$ odd, $u=b+1$: select $(q,a,p,b)$ and continue with $(b,[u,\ldots,q])$.
\end{enumerate}
Every new chord closes an even interval or joins $a$ to the last vertex of $O$ in a polygon $(a,O)$. Its endpoints therefore have opposite index parities. Thus each cell has alternating corner parities, and $G$ has a short arc in every case, as required for~\eqref{eq:one-cell}. The ordering places $X$ on the even child for the weight comparisons in~\cref{sec:guides}.

For each polygon hole $h=[u,\ldots,v]$ of $E$, put $s_{Eh}=(-1)^{u-i}$. The matching containing $T_h$ and its exterior are
\begin{equation}\label{eq:hole-context}
 M_{Eh}(R)=\begin{cases}T_E+R,&s_{Eh}=+1,\\C_E+R,&s_{Eh}=-1,
 \end{cases}\qquad
 U_{Eh}(R)=M_{Eh}(R)\setminus T_h.
\end{equation}
To derive the basin identity, sum~\eqref{eq:one-cell} over the $k_E$ cells selected for $E$ before recursion. Each new cell attaches along one shared diagonal without crossings, so their adjacency graph is a tree. Each shared diagonal $uv$ determines complementary $E$-boundary arcs $J,\bar J$ whose exchanges both reach
$R+N_J+N_{\bar J}+\{uv\}$, so their signed gains sum to $d_E(R)$. These $k_E-1$ diagonals cancel all but one copy of $d_E(R)$. Original boundary edges have zero gain, yielding
\begin{equation}\label{eq:partial-identity}
 d_E(R)=\sum_{\mathrm{cells}}\ell_{\mathrm{cell}}(R)
       +\sum_{h\text{ polygon hole}}s_{Eh}d_h(U_{Eh}(R)),
\end{equation}
where both sums range over the dissection of $E$.
Polygon holes of size two have $T_h=C_h$ and zero gain. Starting at the root, recurse on larger polygon holes as child basins, isolating each directly by~\eqref{eq:isolation}. They are strict descendants in $\mathcal T$ with disjoint sibling subtrees, so no basin repeats. The inherited exterior $U_{Eh}(R_E)$ may differ from the independently isolated exterior $R_h$. Following~\cite[Section 6.4]{OpenAI}, let $\pi_X$ be the matching law outside $X$, and set $m_X=\E_{\pi_X}d_X$, $z_X=d_X-m_X$, and $v_X=\Var_{\pi_X}(d_X)$. For basins and nonshort arcs $X$, $T_X\cup C_X$ is a geometric cycle $c$, and $v_X=v_c$ for either alternating pattern. Short arcs satisfy $d_X=m_X=v_X=0$.

Set $D_E=d_E(R_E)$ for each independently isolated basin. The two gains on a child basin $h$ use the same $T_h,C_h$, so subtracting the same mean $m_h$ gives
\[
 d_h(U_{Eh}(R_E))=D_h+
 \bigl[z_h(U_{Eh}(R_E))-z_h(R_h)\bigr].
\]
The $D_h$ terms will cancel between parent and child identities. The exteriors may differ outside $E$ because isolation can interchange the source matchings. With $h$ ranging over the child basins of $E$, substitution into~\eqref{eq:partial-identity} yields
\begin{equation}\label{eq:basin-centered}
 D_E-\sum_h s_{Eh}D_h
 =\sum_{\mathrm{cells}}\ell_{\mathrm{cell}}(R_E)
 +\sum_h s_{Eh}\bigl[z_h(U_{Eh}(R_E))-z_h(R_h)\bigr]
 =:\mathcal A_E.
\end{equation}
Replacing $d_X$ by $z_X$ in~\eqref{eq:cell-error} likewise leaves $E_H$ unchanged.

Set $\tau_c=1$ at the root and $\tau_h=\tau_Es_{Eh}$ for each child basin. Multiplying~\eqref{eq:basin-centered} by $\tau_E$ and summing over basins cancels every nonroot gain:
\begin{equation}\label{eq:root-expansion}
 D_c=\sum_E\tau_E\mathcal A_E.
\end{equation}
At the root, $D_c$ is evaluated after interchanging the source matchings if needed to place $P_c$ first. Conditional on either assignment of $P_c,Q_c$ to $M_1,M_2$, this first exterior has law $\pi_c$. Both assignments have probability $\theta_c$, giving
\begin{equation}\label{eq:root-square}
 \sum_{d=(M_1,M_2,c)}\rho_dD_c^2
 =2\sum_c\theta_c\,\E_{\pi_c}d_c^2=S(f).
\end{equation}
To bound the number of gain terms, note that a full dissection of $E$ has $|E|/2-1$ cells, giving
\begin{equation}\label{eq:cell-count}
 k_E=\frac{|E|}{2}-1-\sum_{h\text{ polygon hole}}\left(\frac{|h|}{2}-1\right).
\end{equation}
Summing~\eqref{eq:cell-count} cancels the polygon-hole terms. With $N_b$ basins and $F_b=\sum_{E\text{ basin}}k_E$ cells,
\begin{equation}\label{eq:basin-count}
 N_b\le n,\qquad F_b=|c|/2-1<n/2,\qquad
 \#\{\text{parent--child basin links}\}=N_b-1.
\end{equation}
The expansion contains only two-edge switch gains and centered evaluations. We next bound their source masses to control $S(f)$. \Cref{fig:basins} compares the two gains on a child basin.

\begin{figure}[htbp]
\centering
\begin{minipage}[c]{0.48\textwidth}
\centering
\begin{tikzpicture}[scale=.88,every node/.style={font=\small}]
\foreach \i in {0,...,11}{\coordinate (v\i) at ({105+30*\i}:1.75);}
\fill[black!6] (v0)--(v5)--(v6)--(v11)--cycle;
\draw (v0)--(v1)--(v2)--(v3)--(v4)--(v5)--(v6)--(v7)--(v8)--(v9)--(v10)--(v11)--cycle;
\draw[dashed] (v0)--(v5) (v6)--(v11);
\draw[line width=1.1pt] (v5)--(v6) (v11)--(v0);
\foreach \i in {0,...,11}{\fill (v\i) circle (1.1pt);\node at ({105+30*\i}:2.03) {$\i$};}
\node at (-1.05,0) {$h_1$};\node at (1.05,0) {$h_2$};\node at (0,0) {cell};
\end{tikzpicture}
\end{minipage}
\begin{minipage}[c]{0.49\textwidth}
\centering
\small
$h_1=[0,\ldots,5],\quad h_2=[6,\ldots,11]$
\par\smallskip
\begin{tabular}{@{}ll@{}}
\toprule
Source & Gain on $h_1$ \\
\midrule
Inherited from $E$ & $d_{h_1}(T_{h_2}+R_E)$ \\
Independent isolation & $d_{h_1}(R_{h_1})$ \\
\bottomrule
\end{tabular}
\end{minipage}
\caption{One cell ($k_E=1$) and two polygon holes in $E=[0,\ldots,11]$. Here $U_{Eh_1}(R_E)=T_{h_2}+R_E$. The first gain minus the second is the centered error in $\mathcal A_E$.}\label{fig:basins}
\end{figure}

\subsection{Weight comparisons and congestion}\label{sec:guides}
We bound the squared contributions in~\eqref{eq:root-expansion} by comparing weights and bounding how many terms share an output. For each source $d$, assign each term an ordered pair of perfect matchings $(M,Q)$. The first matching determines the function evaluation, while the \emph{guide} $Q$ is used to compare weights. For centered terms, we retain $T_X\subseteq M$ and $C_X\subseteq Q$.

\paragraph{Cell comparisons.}
For an odd arc $J=(v_0,\ldots,v_s)$ of the current cell, obtain $V_J$ from $T_J$ by replacing $v_0v_1,v_{s-1}v_s$ with $v_1v_{s-1}$ when $s\ge3$. Set $V_J=\varnothing$ when $s=1$. Thus $V_J$ matches only the arc interior. Set $V_G=V_{b\to c}\cup V_{d\to a}$ and, for $Z\subseteq\{Y,X\}$, define
\[
 Q_Z=V_G+\sum_{J\in Z}T_J+\sum_{J\in\{Y,X\}\setminus Z}C_J.
\]
Together with $M_Z$ in~\eqref{eq:cell-patterns}, these give full matching pairs $(M_Z+R_E,Q_Z+S_E)$. Use $Q_{\{Y,X\}}+S_E$ for the switch between $A_G+R_E,A_H+R_E$, and $Q_{\{Y\}}+S_E,Q_\varnothing+S_E$ for the respective terms of~\eqref{eq:cell-error}. For $Z=\varnothing,\{Y\}$, the guide contains $C_X$. Before guide repair, the multiset union is independent of $Z$. If the basin's children are even, the $G$ chords cross the children and the $H$ chords lie within them. The comparison using threshold clusters in~\eqref{eq:isolation} shows that the local $\Bflat$-weight products do not decrease from the isolated pair.

For two odd children, modify only the guide by two-edge switches outside $X$, preserving $C_X$. Write $b=\Bflat$ and $b(P)=\prod_{e\in P}b_e$. We compare the isolated product $b(T_E)b(C_E)$ with the local output products, since exterior factors cancel. Successors and predecessors refer to the whole $E$-boundary.

For an odd prefix $[p,\ldots,s]$ and even suffix $X=[c,\ldots,d]$, the cell is $(a,s,c,d)$. If $d\to a$ is short, no repair is needed. For a nonshort $d\to a$, put $x=d+1$, $y=a-1$, $p=a+1$, and
$\gamma=b_{dx}b_{ya}/(b_{da}b_{xy})$.
Cancellation gives $b(T_E)b(C_E)/(b(A_G)b(Q_{\{Y,X\}}))=\gamma$, and the ratio at $A_H$ is $\gamma b_{sc}/b_{cd}\le\gamma$. Apply
\begin{equation*}
 Q_{\{Y,X\}},Q_{\{Y\}}:\ xy+ap\longmapsto ay+px,\qquad
 Q_\varnothing:\ xy+as\longmapsto ay+sx.
\end{equation*}
Vertices $a,x$ lie outside $[p,\ldots,d]$. For each, the $b$-weight is constant on this interval, so each switch multiplies the guide's weight by $\gamma$. Both switches leave $X$ untouched.

For an even prefix $X=[p,\ldots,b]$ and odd suffix $[u,\ldots,d]$, the cell is $(d,a,p,b)$, with $u=b+1$. If $b\to d$ is short, the $A_G,A_H$ ratios are $1,b_{bu}/b_{pb}\le1$. Otherwise set $v=d-1$, $x=d+1$, giving
\begin{equation*}
 \frac{b(T_E)b(C_E)}{b(A_G)b(Q_{\{Y,X\}})}
 =\frac{b_{bu}b_{vd}}{b_{bd}b_{uv}}
 =\gamma:=\frac{b_{vd}}{b_{uv}},\qquad
 \frac{b(T_E)b(C_E)}{b(A_H)b(Q_{\{Y,X\}})}
 =\gamma\frac{b_{bu}}{b_{pb}}\le\gamma.
\end{equation*}
Here $b_{bu}=b_{bd}$, $b_{ap}=b_{ad}$, and $b_{pb}\ge b_{bu}$ by the interval hierarchy. Apply
\begin{equation*}
 Q_{\{Y,X\}},Q_{\{Y\}}:\ uv+dx\longmapsto ux+vd,\qquad
 Q_\varnothing:\ uv+da\longmapsto ua+vd.
\end{equation*}
Vertices $x,a$ lie outside $[p,\ldots,d]$. For each, the $b$-weight is constant on this interval. Each switch multiplies the guide's weight by $\gamma$ and leaves $X$ untouched. If $Y$ is short, $E_H=0$, and the switch in $Q_{\{Y,X\}}$ remains valid with $x=a$. All these switches use four distinct vertices.

To pass to $a$-weights, we bound the number of changed edge occurrences, with multiplicity. In the following sums, $J$ ranges over the four cell arcs and $\operatorname{chord}_J$ denotes its endpoint chord. The isolated pair has union $\sum_J(T_J+N_J)$ on $E$. Before guide repair, $M_Z+Q_Z$ equals
\[
 \sum_JN_J+\sum_{J\in\{Y,X\}}(\operatorname{chord}_J+T_J)+V_G.
\]
This adds the two $H$ chords and replaces $T_J$ by $V_J$ on each $G$ arc. A short arc removes one occurrence, while a nonshort one removes two and adds one. Since $G$ has a short arc, at most three occurrences are removed and three added. Isolation and guide repair each remove at most two further occurrences, giving $2+3+2=7$. Replacing $A_H$ by $A_G$ raises this bound to nine. Additions satisfy the same bounds because each pair contains $n$ edge occurrences. After guide repair, let $M,M'$ be the two switch matchings and $Q_{\mathrm{loc}}$ their common full guide. For either term of~\eqref{eq:cell-error}, let $(M,Q)$ be its full pair. As in~\eqref{eq:isolation-weight}, compare with $B$ using $1\le a_e/B_e\le r$ to obtain
\begin{equation}\label{eq:cell-comparisons}
 \rho_d\le r^9\min\{\pi(M),\pi(M')\}\pi(Q_{\mathrm{loc}}),\qquad \rho_d\le r^7\pi(M)\pi(Q),\qquad C_X\subseteq Q.
\end{equation}

\paragraph{Polygon-hole comparisons.}
For a nontrivial polygon hole $h=[u,\ldots,v]$ of $E$, the two terms in~\eqref{eq:basin-centered} are $z_h(U_{Eh}(R_E))$ and $z_h(R_h)$. For the first, take $M=M_{Eh}(R_E)\supseteq T_h$. By~\eqref{eq:hole-context}, this is the first matching of $Y_E$, after a cycle exchange on $E$ if $s_{Eh}=-1$. Let $x,y$ immediately precede and follow $h$ on the $E$-boundary. In the second matching, apply the two-edge switch
\begin{equation*}
 xu+vy\longmapsto uv+xy.
\end{equation*}
The resulting guide $Q$ contains $C_h$. Since $h$ is a node of $\mathcal T$, the threshold-cluster argument in~\eqref{eq:isolation} shows that this switch cannot decrease the $\Bflat$-weight product. Both this switch and the isolation of $E$ use $\Bflat$ on $V(c)$, even if the switch removes $E$'s closing chord. Together they remove at most four edge occurrences from the source, giving
\begin{equation}\label{eq:hole-comparison}
 \rho_d\le r^4\pi(M)\pi(Q),\qquad T_h\subseteq M,\quad C_h\subseteq Q.
\end{equation}
For $z_h(R_h)$, use the independently isolated pair $Y_h=(T_h+R_h,C_h+S_h)$. By~\eqref{eq:isolation-weight},
\begin{equation}\label{eq:child-comparison}
 \rho_d\le r^2\pi(T_h+R_h)\pi(C_h+S_h).
\end{equation}
Taking $X=h$ here, these bounds and~\eqref{eq:cell-comparisons} give $\rho_d\le r^7\pi(M)\pi(Q)$ for every centered evaluation, with $T_X\subseteq M$ and $C_X\subseteq Q$.
For a two-edge switch, choose its endpoint of smaller $\pi$-probability as $M$, breaking ties by a fixed matching order. Its guide is common to both endpoints.

\paragraph{One encoding for all levels.}\label{sec:accounting}
An occurrence is a source $d=(M_1,M_2,c)$ and a term of~\eqref{eq:root-expansion}, fully expanded before repeated terms are combined. There are five types: a two-edge switch gain, either evaluation of a cell error, or the parent or child evaluation of a polygon-hole error. Adapting~\cite[Lemma 6.4]{OpenAI}, the following record and output pair recover each occurrence.

Record seven vertex slots: the least vertex of $c$, the basin endpoints, and the four cell corners. For a parent--child link, replace the corners by child endpoints and fixed padding. Record the type and four binary flags: which alternating pattern on $c$ belongs to $M_1$, parity of interchanges of the two whole matchings, choice of switch endpoint, and order of the $H$ arcs. Fix unused flags. Two sorted lists, padded to nine entries each, record deletions and additions in the multiset union, retaining multiplicities but omitting matching labels. Cycle exchanges preserve this union and require no entries.

Recover the occurrence from $(M,Q)$ and this record as follows.
\begin{enumerate}
\item Reverse the recorded additions and deletions in $M\uplus Q$ to recover $M_1\uplus M_2$. Its discrepancy cycle containing the first recorded vertex is $c$.
\item On $c$, the flag for $M_1$ fixes both alternating patterns. All edited edges lie in $c$, so outside $c$ we only undo the recorded interchange of the two whole matchings.
\item Reconstruct the hierarchy and selected basins from $c$, $B$, and the fixed tie rules.
\item Locate the term using the basin endpoints, cell corners or child endpoints, type, and flags. Each child basin has a unique parent in the recursion.
\end{enumerate}
The seven vertex slots have at most $n^7$ choices, the type and flags have $5\cdot2^4=80$, and each list entry has at most $1+n^2$. Thus each output pair has at most
\begin{equation*}
 80n^7(1+n^2)^{18}
 <(10n)^{60}=:\enc
\end{equation*}
occurrences in total, over all sources, basins, and depths.

\paragraph{Summing the loads.}
The load is the sum of $\rho_d$ over occurrences assigned to a fixed switch or centered evaluation. For a fixed unordered switch $\{M,M'\}$, use the endpoint $M$ chosen above. By~\eqref{eq:cell-comparisons}, at most $\enc$ occurrences per $(M,Q)$ and $\sum_Q\pi(Q)\le1$ give a load of at most $\enc r^9\pi(M)$. To compare with $L(f)$, use $q(M,M')\ge q_*:=1/(30n^2)$ to bound the load by
\begin{equation}\label{eq:local-load}
 \frac{\enc r^9}{q_*}\,
 q(M,M')\min\{\pi(M),\pi(M')\}.
\end{equation}
For a nonshort directed odd path $X$ with its closing chord and a matching $M\supseteq T_X$, let $H(X,M)$ be the load of $z_X(M\setminus T_X)$. Summing~\cref{eq:cell-comparisons,eq:hole-comparison,eq:child-comparison} over guides $Q\supseteq C_X$, with at most $\enc$ occurrences per $(M,Q)$, gives
\begin{equation*}
 H(X,M)\le\enc r^7\pi(M)p_X,\qquad
 p_X=\pi(C_X\subseteq Q).
\end{equation*}
Multiply by $z_X(M\setminus T_X)^2$ and extend the nonnegative sum to all $M\supseteq T_X$. Under $\pi$ conditioned on $T_X\subseteq M$, the exterior has law $\pi_X$, so the total centered contribution is at most
\begin{equation}\label{eq:centered-sum}
 \enc r^7\sum_Xp_X\sum_{M\supseteq T_X}\pi(M)z_X(M\setminus T_X)^2=\enc r^7\sum_Xt_Xp_Xv_X,\qquad t_X=\pi(T_X\subseteq M).
\end{equation}
The guide and matching sums supply $p_X$ and $t_X$, respectively. Closing $X$ gives a geometric cycle $c$ with $t_Xp_X=\theta_c$ and $v_X=v_c$. The closing edge and direction give at most $2|c|\le2n$ representations of $c$. Since $V(f)=2\sum_c\theta_cv_c$,
\begin{equation}\label{eq:physical-sum}
 \sum_Xt_Xp_Xv_X\le nV(f).
\end{equation}

Each cell contributes one switch gain and at most two centered evaluations, and each nontrivial parent--child link contributes two centered evaluations. By~\eqref{eq:basin-count}, the sum $W$ of absolute coefficients in the expansion of~\eqref{eq:root-expansion} satisfies
\begin{equation*}
 W\le3F_b+2(N_b-1)<4n.
\end{equation*}
Apply Cauchy--Schwarz once to this expansion, multiply by $\rho_d$, and sum over sources. \Cref{eq:root-square,eq:local-load,eq:centered-sum,eq:physical-sum} give
\begin{equation*}
 S(f)\le4n\enc\left[\frac{r^9}{q_*}L(f)+nr^7V(f)\right].
\end{equation*}
Using~\cref{eq:cycle-square,eq:exchange-control}, $q_*^{-1}=30n^2$, $K\le n/4$, and $r\ge1$, we obtain
\[
 \Var_\pi(f)
 \le60n^3\enc r^9L(f)+12n^2K\enc r^7C(F)
 \le60n^3\enc r^9\cE(F)
 \le\Cp\cE(F).
\]
This proves \cref{thm:pair}.\hfill$\square$

\section{A replicated sampler}\label{sec:sampling}
\Cref{thm:pair} controls additive functions of two matchings. Sampling requires a variance bound for arbitrary functions of the chain's state. We connect the uniform and target laws through nearby levels, then apply OpenAI's replication argument with several matchings per level.

\subsection{A path of nearby laws}
Suppose integers $D,Q\ge1$ are given and
\begin{equation}\label{eq:structural-bounds}
 B_0\le a\le(D+1)B_0,\qquad D^{-1}\le a_e\le2\cdot4^Q
\end{equation}
for some positive ultrametric similarity $B_0$. Multiplying every activity by $D$ multiplies every matching weight by $D^{n/2}$, so $\pi_{Da}=\pi_a$. Set $r=D+1$, $c=1+1/n$, and take the least integer $T$ such that $c^T\ge2D4^Q$. Define
\begin{equation}\label{eq:capped-path}
 a_e^{(t)}=\min\{c^t,Da_e\},\qquad
 B_e^{(t)}=\min\{c^t,DB_{0,e}\},\qquad
 \pi_t=\pi_{a^{(t)}}\quad(0\le t\le T).
\end{equation}
Since $1\le Da_e\le2D4^Q$, $\pi_0$ is uniform and $\pi_T=\pi_a$. Capping preserves the ultrametric inequality for $B^{(t)}$, and $\min\{u,rx\}\le r\min\{u,x\}$ gives
\begin{equation*}
 B^{(t)}\le a^{(t)}\le rB^{(t)},
\end{equation*}
so \cref{thm:pair} applies at every level with the same $\Cp$.

Between adjacent levels, edge weights increase by factors in $[1,c]$, so both matching weights and the partition function increase by factors in $[1,c^{n/2}]$. Since $c^{n/2}<2$ and $\log c\ge1/(2n)$,
\begin{equation}\label{eq:adjacent-laws}
 \frac12\le\frac{\pi_{t-1}(M)}{\pi_t(M)}\le2,\qquad
 T\le1+2n\log(2D4^Q)
 \le1+4n\log\max(D,2\cdot4^Q).
\end{equation}
The path uses only $a,D,Q$, with $B^{(t)}$ certifying the comparison. To sample $\pi_0$, repeatedly match the least unused vertex to a uniformly chosen unused partner. Each matching has probability $[(n-1)(n-3)\cdots1]^{-1}$.

\subsection{The replicated chain and its gap}
Use $q=32\Cp$ matchings $M_{t,j}$ per level, $0\le t\le T$, $1\le j\le q$, with stationary law $\nu=\bigotimes_{t=0}^T\pi_t^{\otimes q}$.
A \emph{coordinate} is one matching position $(t,j)$. This product law makes the matchings independent under $\nu$, with single-coordinate distribution, or \emph{marginal}, $\pi_t$. Number the positions from highest to lowest level, then by $j$. Thus lower-level positions come later in this order.

The level-zero positions are the \emph{base coordinates}. A \emph{refresh} independently resamples one such matching from $\pi_0$, leaving all others fixed. The \emph{guides} of a level-$t$ position, $t\ge1$, are the $q$ positions at level $t-1$. Pair updates use the two-edge switch, cycle exchange, and holding proposals of $\mathsf K$, with the higher-level matching first. All proposals, including cycle exchanges, use Metropolis acceptance for $\pi_t\otimes\pi_{t-1}$. The chain $P$ holds with probability $1/2$ and otherwise chooses uniformly among these $J=Tq^2+q$ updates. All updates, hence $P$, are $\nu$-reversible.

Use the energy $\cE_{\sigma,P}$ defined in \cref{sec:comparison}. For $\cE_i(f)$ (a base refresh) and $\cE_{xy}(f)$ (a pair update), fix the other coordinates, compute the local energy of $f$, then average over their values under $\nu$. For a refresh, the local energy equals the variance under its marginal.

The following lemma, extracted from OpenAI's replication proof~\cite[Lemma 7.2]{OpenAI}, converts an additive pair bound into a bound for every function of the full state.

\begin{lemma}\label{lem:replication}
Let $\xi=(\xi_1,\ldots,\xi_m)\sim\nu=\bigotimes_{x=1}^m\mu_x$, where $\mu_x$ is a law on a finite set $\mathcal X_x$. Let $I_0\subseteq[m]$ index base coordinates, and choose guide sets
\[
 \Gamma_x\subseteq\{x+1,\ldots,m\},\qquad |\Gamma_x|=q\ge1
 \quad(x\notin I_0).
\]
Refresh each $i\in I_0$ from $\mu_i$. For $x\notin I_0$, $y\in\Gamma_x$, let $K_{xy}$ be a $\mu_x\otimes\mu_y$-reversible kernel depending only on $(\xi_x,\xi_y)$, with energy $\cE_{xy}$. If $C>0$ and, for each such pair and all $a:\mathcal X_x\to\mathbb R$, $b:\mathcal X_y\to\mathbb R$,
\begin{equation}\label{eq:replication-assumption}
 \Var_{\mu_x}(a)\le C\cE_{xy}(a(\xi_x)+b(\xi_y)),
\end{equation}
then every $f:\prod_{x=1}^m\mathcal X_x\to\mathbb R$ satisfies
\begin{equation}\label{eq:replication-bound}
 \left(1-\frac{4C}{q}\right)\Var_\nu(f)
 \le\frac{2C}{q}\sum_{x\notin I_0}\sum_{y\in\Gamma_x}\cE_{xy}(f)
       +\sum_{i\in I_0}\cE_i(f).
\end{equation}
\end{lemma}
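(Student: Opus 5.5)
The plan is a Doob martingale decomposition along the fixed coordinate order, followed by a per-coordinate application of \eqref{eq:replication-assumption}, using a guide $y\in\Gamma_x$ to absorb the future. I expect the cost of that absorption, the residual in the third step below, to be the main obstacle.

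\emph{Decomposition.} For $0\le x\le m$ let $g_x=\E_\nu[f\mid\xi_1,\ldots,\xi_x]$, so $g_0=\E_\nu f$ and $g_m=f$. Orthogonality of martingale increments gives
\[
 \Var_\nu(f)=\sum_{x=1}^m\E_\nu\bigl[(g_x-g_{x-1})^2\bigr]
 =\sum_{x=1}^m\E_\nu\Var_{\mu_x}\bigl(g_x(\xi_{<x},\cdot)\bigr),
\]
where $\xi_{<x}=(\xi_1,\ldots,\xi_{x-1})$ is held fixed inside the variance.

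\emph{Base coordinates.} For $i\in I_0$, the function $g_i(\xi_{<i},\cdot)$ is an average of $f$ over the coordinates after $i$. Convexity of the variance (Jensen) therefore gives $\E\Var_{\mu_i}(g_i)\le\E\Var_{\mu_i}(f)=\cE_i(f)$, since for a refresh the local energy is the marginal variance. This produces the last sum in \eqref{eq:replication-bound}.

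\emph{Non-base coordinates.} Fix $x\notin I_0$ and $y\in\Gamma_x$. Since $y>x$, the coordinate $\xi_y$ is integrated out in $g_x$. Set
\[
 h_y=\E[f\mid\xi_{\le x},\xi_y],\qquad a=g_x(\xi_{<x},\cdot),\qquad b=h_y-g_x .
\]
Independence of the coordinates makes $b$ depend, for fixed $\xi_{<x}$, on $(\xi_x,\xi_y)$ in a way I will split as a function of $\xi_y$ plus a cross term. More simply, I apply \eqref{eq:replication-assumption} to the additive pair $a(\xi_x)+b'(\xi_y)$, with $b'$ the $\xi_y$-projection of $h_y-g_x$. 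Averaging the resulting bound over the $q$ guides gives
\[
 \Var_{\mu_x}(a)\le\frac Cq\sum_{y\in\Gamma_x}\cE_{xy}(a+b').
\]
Then $\cE_{xy}(a+b')\le2\cE_{xy}(f)+2\cE_{xy}(f-a-b')$. The first term gives the factor $2C/q$ in \eqref{eq:replication-bound}.

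\emph{Residual.} The residual uses the fact that for any reversible kernel the energy is at most twice the variance, so $\cE_{xy}(\phi)\le2\,\E\Var_{\mu_x\otimes\mu_y}(\phi)$. It therefore remains to show
\[
 \sum_{x\notin I_0}\sum_{y\in\Gamma_x}\E\bigl[\Var_{(x,y)}(f-a-b')\bigr]\le\Var_\nu(f)
\]
up to the constant that yields the $4C/q$ loss.

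I would try to prove this with an Efron--Stein/ANOVA expansion of $f$ into orthogonal components indexed by subsets $S\subseteq[m]$. The function $a+b'$ is the projection onto components whose support lies in $\{1,\ldots,x\}\cup\{y\}$ with top index $x$ or $y$. The conditional $(x,y)$-variance of the remainder then counts only components containing $x$ or $y$ together with some other coordinate beyond $x$. The task is to arrange the bookkeeping so that each ANOVA component is charged $O(1)$ times in total.

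If that charging fails, I would redefine $a$ and $b'$ via the martingale increments. That is, I would take $b'(\xi_y)=\E[g_y-g_{y-1}\mid\xi_y,\xi_{\le x}]$, so that the residual is dominated by the later increment $\E(g_y-g_{y-1})^2$. This exploits that the guides lie strictly later in the order.

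Collecting terms gives
\[
 \Var_\nu(f)\le\frac{2C}{q}\sum\cE_{xy}(f)+\frac{4C}{q}\Var_\nu(f)+\sum_{i\in I_0}\cE_i(f),
\]
and rearranging yields \eqref{eq:replication-bound}.
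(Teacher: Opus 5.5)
\textbf{The residual step fails.} Your Doob decomposition, the base-coordinate step and your additive function are all right. With $\xi_{<x}$ fixed, $a+b'=\E[f\mid\xi_{\le x}]+\E[f\mid\xi_{<x},\xi_y]-\E[f\mid\xi_{<x}]$. The problem is the split $\cE_{xy}(a+b')\le2\cE_{xy}(f)+2\cE_{xy}(f-a-b')$. Because you compare with $f$ itself, the residual $f-a-b'$ keeps every ANOVA component $f_S$ with $S\not\subseteq\{1,\ldots,x\}\cup\{y\}$ and $S\cap\{x,y\}\ne\varnothing$. Such a component is charged once for \emph{every} guide of $x$. So the inequality you reduce to is false, not merely hard to book-keep.

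Concretely, take each $\mu_x$ uniform on $\{\pm1\}$, $m=q+1$, $I_0=\{2,\ldots,m\}$, $\Gamma_1=\{2,\ldots,m\}$ and $f(\xi)=\xi_1\xi_m$. Then $a=b'=0$ for every guide, so the residual is $f$, and $\E\Var_{(1,y)}(f)=1$ for each of the $q$ guides. Your sum is therefore $q=q\Var_\nu(f)$. The loss term becomes $4C\Var_\nu(f)$ rather than $\frac{4C}{q}\Var_\nu(f)$, and the bound is vacuous.

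No other choice of $a,b'$ depending only on $\xi_{\le x},\xi_y$ can help, since subtracting such functions leaves those components of $f$ untouched. Your fallback therefore has the same defect. Moreover, as written, its $b'=\E[g_y-g_{y-1}\mid\xi_y,\xi_{\le x}]$ depends on $\xi_x$, so it is not of the additive form required by \eqref{eq:replication-assumption}.

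\textbf{The missing idea.} First integrate out the coordinates the kernel does not touch. Put $g_{xy}=\E[f\mid\xi_{\le x},\xi_y]$. Since $K_{xy}$ acts only on $(\xi_x,\xi_y)$, Jensen's inequality gives $\cE_{xy}(g_{xy})\le\cE_{xy}(f)$. Write $f=\sum_Sf_S$ and set
\[
 r_{xy}=\sum_{\{x,y\}\subseteq S\subseteq\{1,\ldots,x\}\cup\{y\}}f_S .
\]
Then $a+b'=g_{xy}-r_{xy}$. Using $\cE_{xy}(\phi)\le2\norm{\phi}^2$, the averaged $\Var_{\mu_x}(a)$ is at most $2C\cE_{xy}(f)+4C\norm{r_{xy}}^2$.

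Now your ANOVA charging does work. Each $S$ with $|S|\ge2$ appears in exactly one $r_{xy}$, namely with $y=\max S$ and $x$ the second-largest element of $S$. Hence $\sum_{x\notin I_0}\sum_{y\in\Gamma_x}\norm{r_{xy}}^2\le\Var_\nu(f)$.

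The paper states the same fact as $r_{xy}=\E[h_y\mid\xi_{\le x},\xi_y]-\E[h_y\mid\xi_{<x},\xi_y]$ with $h_y=g_y-g_{y-1}$. For fixed $y$, these are orthogonal increments of $h_y$ when $\xi_y$ is revealed first. What matters is orthogonality across the different $x$ sharing the guide $y$. The per-pair domination $\norm{r_{xy}}^2\le\norm{h_y}^2$ that your fallback aims at would again lose a factor equal to the number of such $x$, which is $q$ in the application.
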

\begin{proof}
All norms are in $L^2(\nu)$. With $U_x=(\xi_1,\ldots,\xi_{x-1})$, revealing $\xi_x$ adds the increment $h_x=\E[f\mid U_x,\xi_x]-\E[f\mid U_x]$. By orthogonality, $\sum_x\norm{h_x}^2=\Var_\nu(f)$. For $x<y$, subtracting single-coordinate terms from the conditional mean $g_{xy}=\E[f\mid U_x,\xi_x,\xi_y]$ leaves the interaction part
\[
 r_{xy}=g_{xy}-\E[f\mid U_x,\xi_x]-\E[f\mid U_x,\xi_y]+\E[f\mid U_x].
\]
For a chosen guide pair, conditional on $U_x$, $g_{xy}-r_{xy}$ is additive in $\xi_x,\xi_y$, and its $\xi_x$ summand has averaged variance $\norm{h_x}^2$. Jensen's inequality gives $\cE_{xy}(g_{xy})\le\cE_{xy}(f)$. Using~\eqref{eq:replication-assumption}, $\cE(u-v)\le2\cE(u)+2\cE(v)$, and $\cE(v)\le2\norm v^2$ for stationary kernels yields
\begin{equation}\label{eq:doob-first}
 \norm{h_x}^2\le2C\cE_{xy}(f)+4C\norm{r_{xy}}^2.
\end{equation}
For fixed $y$, independence and the tower property give
\[
 r_{xy}=\E[h_y\mid U_x,\xi_x,\xi_y]-\E[h_y\mid U_x,\xi_y].
\]
Reveal $\xi_y$ first, then $\xi_1,\ldots,\xi_{y-1}$. These are orthogonal increments of $h_y$, so $\sum_{x<y}\norm{r_{xy}}^2\le\norm{h_y}^2$. Summing over guide pairs gives $\sum_{x\notin I_0}\sum_{y\in\Gamma_x}\norm{r_{xy}}^2\le\Var_\nu(f)$. At a base coordinate, refresh and Jensen's inequality give $\norm{h_i}^2\le\cE_i(f)$. Average~\eqref{eq:doob-first} over $\Gamma_x$ and sum over $x\notin I_0$:
\[
 \Var_\nu(f)\le\frac{2C}{q}\sum_{x\notin I_0}\sum_{y\in\Gamma_x}\cE_{xy}(f)
 +\frac{4C}{q}\Var_\nu(f)+\sum_{i\in I_0}\cE_i(f).
\]
Moving the variance term to the left proves~\eqref{eq:replication-bound}.
\end{proof}

Apply the lemma with $I_0$ the level-zero positions and $\Gamma_x$ the positions one level below $x$. By~\eqref{eq:adjacent-laws}, $\pi_t\otimes\pi_{t-1}\ge\tfrac12\pi_t\otimes\pi_t$ pointwise. The symmetric proposal rule is unchanged. For each unordered state pair, its Metropolis energy coefficient is the proposal probability times the smaller endpoint density, hence at least half its equal-level value. Thus \cref{thm:pair} gives~\eqref{eq:replication-assumption} with $C=2\Cp$. Since $q=16C$ and $P$ chooses each update with probability $1/(2J)$, \cref{lem:replication} gives
\begin{equation*}
 \Var_\nu(f)\le\frac16\sum_{x\notin I_0}\sum_{y\in\Gamma_x}\cE_{xy}(f)
 +\frac43\sum_{i\in I_0}\cE_i(f)
 \le\frac{8J}{3}\cE_{\nu,P}(f).
\end{equation*}
Hence $P$ has gap at least $3/(8J)$ and is irreducible: a nonconstant closed-class indicator would violate this bound. The holding probability makes $P$ lazy.

Adapting OpenAI's mixing and finite-precision analysis~\cite[Theorem 7.3 and Section 9]{OpenAI} to this gap bound gives the following sampler. Write $\bit(a)$ for the activities' total binary encoding length and $\|\mu-\nu\|_{\TV}=\tfrac12\sum_x|\mu(x)-\nu(x)|$ for total variation distance.

\begin{theorem}\label{thm:structural-sampler}
Given even $n\ge4$, positive rational $a$ satisfying~\eqref{eq:structural-bounds}, and an integer $b\ge1$, there is a sampler within $2^{-b}$ of $\pi_a$ in total variation. Its bit running time and fair-random-bit budget are polynomial in $n,D,Q,\bit(a),b$ on every execution. Only $a,D,Q,b$ are inputs.
\end{theorem}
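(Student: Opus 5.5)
The plan is to run the replicated chain $P$ from a product of exact samples of $\pi_0$, take its final level-$T$ coordinate, and use the spectral gap bound $3/(8J)$ just established to control the distance to $\nu$. The marginal of the level-$T$ coordinate under $\nu$ is $\pi_T=\pi_a$, and projection does not increase total variation. It therefore suffices to make the law of the full replicated state within $2^{-b-1}$ of $\nu$, and to simulate each step in finite precision with total error at most $2^{-b-1}$.

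\textbf{Step 1: mixing time from the gap.} The initial law is $\mu_0=\pi_0^{\otimes q(T+1)}$, which we sample exactly by the sequential uniform pairing described above, using rejection from fair bits to draw uniform partners. Since $P$ is lazy, its spectrum lies in $[0,1]$. Reversibility then gives
\[
 \|\mu_0P^s-\nu\|_{\TV}\le\tfrac12\bigl(\chi^2(\mu_0,\nu)\bigr)^{1/2}(1-3/(8J))^s .
\]
To bound $1+\chi^2=\E_{\mu_0}[\mu_0/\nu]$, use the telescoped ratios~\eqref{eq:adjacent-laws}, which give $\pi_0(M)/\pi_t(M)\le2^t$. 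Hence $\mu_0/\nu\le2^{qT(T+1)}$. By~\eqref{eq:adjacent-laws}, $T$ is polynomial in $n$, $\log D$ and $Q$, and $J=Tq^2+q$ with $q=32\Cp$ polynomial in $n$ and $D$. Taking $s=\lceil(8J/3)(qT(T+1)+b+2)\rceil$ steps therefore suffices, and $s$ is polynomial in $n,D,Q,b$.

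\textbf{Step 2: one step in finite precision.} A step of $P$ consists of the following operations: a fair bit for holding; a uniform choice among $J$ updates; either a refresh from $\pi_0$ or a pair proposal (a two-edge switch, a hold, or a cycle exchange on a uniformly chosen discrepancy cycle); and then a Metropolis acceptance for $\pi_t\otimes\pi_{t-1}$. The uniform choices are made by rejection sampling with fair bits. To keep the worst-case running time polynomial, we cap the number of rejection rounds at a polynomial and output an arbitrary default on failure, and we charge that failure probability to the error budget. The acceptance ratio is a product of at most eight capped activities $\min\{c^t,Da_e\}$ and their inverses, where $c=1+1/n$. We compute it to $\beta$ bits by rounding $c^t$ and the rationals $Da_e$, whose bit lengths are polynomial in $n$, $T$ and $\bit(a)$. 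Acceptance then compares a $\beta$-bit uniform against the rounded ratio. Each such step differs from an exact step of $P$ in total variation by at most $2^{-\beta+O(1)}$ plus the rejection-failure probability. By the triangle inequality, the errors over $s$ steps add, so choosing $\beta=b+\lceil\log_2 s\rceil+O(1)$ keeps the total at most $2^{-b-1}$.

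\textbf{Main obstacle.} I expect the delicate part to be Step 2's worst-case guarantee, namely bounding the bit cost and the discrepancy of the approximate kernel uniformly on every execution. This has to cover exact comparisons of $c^t$ with $Da_e$ in the capping, where ties need exact rational arithmetic, not rounding. It also has to cover the fact that cycle exchanges now need Metropolis acceptance, since $\pi_t\otimes\pi_{t-1}$ is not exchange-invariant. My first approach would be to carry out all capping decisions in exact rational arithmetic, with $c^t$ represented exactly as a rational of bit length $O(t\log n)$, and to round only the final acceptance probability. The polynomial bounds on $s$, $\beta$ and the bit lengths then yield the claimed polynomial time and random-bit budget.
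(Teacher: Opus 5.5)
Your proposal is correct and is essentially the paper's argument: you turn the gap bound $3/(8J)$ into a total-variation bound by $L^2$ contraction with an explicit warmness factor, then simulate with exact rational Metropolis ratios and fixed-budget discretized draws whose per-step errors add over the $s$ steps. The differences are cosmetic. You start from $\pi_0^{\otimes q(T+1)}$ and bound $\chi^2$ by telescoping~\eqref{eq:adjacent-laws}, whereas the paper starts from a fixed state using $\log(1/\min\nu)\le\Lambda$ and draws via $k$-bit cumulative intervals rather than capped rejection. One small correction: a cycle-exchange ratio for $\pi_t\otimes\pi_{t-1}$ involves up to $2n$ activity factors rather than eight, but it still has polynomial bit length, so nothing changes.
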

\begin{proof}
The implementation in \cref{app:simulation} runs $P$ from fixed matchings and outputs one top-level matching, whose stationary law is $\pi_a$. It uses fixed time and random-bit budgets and bounds mixing and total rounding errors by $2^{-b-1}$ each.
\end{proof}

\section{Completion and marking}\label{sec:completion}
Completion gives the comparison required by \cref{thm:structural-sampler}. We bound its increase in total weight and mark edges by their weight contributions.

For positive activities $\lambda$, set $D=10^8(n+1)^4$ and define
\begin{equation}\label{eq:bottleneck}
 B_{ij}=\max\left(1,\max_{P:i\leadsto j}\min_{e\in P}\lambda_e\right),
 \qquad \mu=\lambda+B/D,
\end{equation}
where $P$ ranges over simple $i$-to-$j$ paths in $K_n$. Concatenating paths and removing loops gives the ultrametric inequality for $B$, and one-edge paths give $\lambda\le B$. To compute $B$, initialize $B_{ij}=\max(1,\lambda_{ij})$, $B_{ii}=\infty$, and, for each $k$, replace $B_{ij}$ by $\max(B_{ij},\min(B_{ik},B_{kj}))$. After step $k$, the allowed internal vertices are $1,\ldots,k$. This takes $O(n^3)$ comparisons, and every finite value is one or an input activity. The completion satisfies
\begin{equation*}
 B/D\le\mu\le(D+1)B/D,
\end{equation*}
providing the ultrametric comparison in~\eqref{eq:structural-bounds}.

\subsection{The increase in total weight}
For positive activities $a$, let $Z_a(-U)$ be the total weight of matchings leaving exactly the \emph{hole set} $U\subseteq[n]$ unmatched. The ratio $g_a(U)=Z_a(-U)/Z_a$ can exceed one and is not a probability. We omit braces, writing $g_a(ij)=g_a(\{i,j\})$, and have $g_a(\varnothing)=1$, $Z_a(-[n])=1$. Set $r_i(a)=\max_{j\ne i}g_a(ij)$. We call $\lambda$ \emph{balanced} if
\begin{equation*}
 \tfrac14\le r_i(\lambda)\le1\qquad\text{for every }i.
\end{equation*}
Thus all $g_\lambda(ij)\le1$, and for each $i$ some $j\ne i$ has $g_\lambda(ij)\ge1/4$. Growth of the total weight involves two-hole ratios, whose changes involve four-hole ratios. The next lemma bounds both.

\begin{lemma}[OpenAI~{\cite[Lemmas 3.2--3.3]{OpenAI}}]\label{lem:holes}
For positive activities $a$ and distinct $i,j,u,v$,
\begin{equation}\label{eq:fourholes}
 g_a(ijuv)\le g_a(ij)g_a(uv)+g_a(iu)g_a(jv)+g_a(iv)g_a(ju).
\end{equation}
If also $a\ge\lambda$ and $1/8\le r_i(a)\le4$ for every $i$, then for $|U|\in\{2,4\}$ and distinct $i,j\in U$,
\begin{equation}\label{eq:strongpath}
 g_a(U)\le10^5n/B_{ij}.
\end{equation}
\end{lemma}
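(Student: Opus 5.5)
For \eqref{eq:fourholes} I would use a switching injection on superpositions. Write $U=\{i,j,u,v\}$ and take a pair $(M,M')$ where $M$ is a matching leaving exactly $U$ unmatched and $M'$ is a perfect matching. In the symmetric difference $M\triangle M'$, every vertex outside $U$ has degree $0$ or $2$, and each vertex of $U$ has degree $1$. So the components are alternating cycles together with exactly two alternating paths, and these paths pair up the four holes in one of the three ways $\{ij,uv\}$, $\{iu,jv\}$, $\{iv,ju\}$.

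Suppose the pairing is $\{ij,uv\}$, and let $P$ be the path joining $u$ and $v$. Then $(M\triangle P,\,M'\triangle P)$ is a pair consisting of a matching with holes exactly $\{i,j\}$ and a matching with holes exactly $\{u,v\}$. The multiset union $M\uplus M'$ is unchanged, so the product of the two weights is preserved. Given the pairing type, the map is injective: from the image pair we recover the same union, the path through $u,v$ in it, and hence $P$.

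Summing over the three pairing types gives
\[
Z_a(-U)Z_a\le\sum_{\text{pairings}}Z_a(-ij)Z_a(-uv),
\]
and dividing by $Z_a^2$ yields \eqref{eq:fourholes}. The only check needed is that the image matchings have exactly the claimed hole sets, which follows because $P$ starts and ends at holes of $M$ on edges of $M'$.

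For \eqref{eq:strongpath}, first treat the trivial level $B_{ij}=1$. For $|U|=2$ we have $g_a(U)\le r_i(a)\le4$. For $|U|=4$, \eqref{eq:fourholes} gives $g_a(U)\le3\cdot4^2=48$. Both are well below $10^5n$.

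In general, let $\beta=B_{ij}>1$. By \eqref{eq:bottleneck} there is a simple path $i=v_0,v_1,\ldots,v_s=j$ with $s\le n-1$ and $a_{v_tv_{t+1}}\ge\lambda_{v_tv_{t+1}}\ge\beta$ for every $t$. Two elementary facts drive the argument.
\begin{enumerate}
\item Deleting an edge from a perfect matching gives the bound $a_{xy}Z_a(-xy)\le Z_a$ for every pair $x,y$. Hence $g_a(v_tv_{t+1})\le1/\beta$ for each path edge.
\item More generally, $a_{xy}Z_a(-(W\cup\{x,y\}))\le Z_a(-W)$ for $x,y\notin W$.
\end{enumerate}

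The goal is then a \emph{hole-transport} step that moves one hole one step along the path at additive cost $O(1/\beta)$. I would aim for
\[
g_a(\{v_0,v_{t+1}\})\le g_a(\{v_0,v_t\})+O(1/\beta)
\]
rather than a multiplicative recursion. I would try to derive it by superimposing a matching with holes $\{v_0,v_{t+1}\}$ on one with holes $\{v_t,v_{t+1}\}$. I would then reroute along the alternating path through $v_t$ to produce a matching with holes $\{v_0,v_t\}$, using the edge $v_tv_{t+1}$ of weight at least $\beta$ and fact~2. The bounds $r(a)\in[1/8,4]$ keep every two-hole factor bounded, and \eqref{eq:fourholes} bounds the four-hole ratios that appear.

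Iterating at most $n$ times would give $g_a(ij)\le O(n/\beta)$ for $|U|=2$. For $|U|=4$ with $i,j\in U$, I would apply \eqref{eq:fourholes}: the pairing containing $ij$ contributes at most $g_a(ij)\cdot4$. Each of the other two pairings contains a pair that meets both $i$ and $j$ through the third and fourth holes. I would bound those terms by $B_{ij}\ge\min\{B_{iu},B_{uj}\}$ together with the two-hole case at the larger of those levels.

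The main obstacle is the transport inequality. Naive multiplicative chaining along the path loses a factor per step, which would give an exponential rather than a linear dependence on $n$. The additive version needs a careful injection that charges the rerouted weight to the single heavy edge $v_tv_{t+1}$. If that fails, my fallback is to induct over threshold clusters rather than path steps. The idea is to average the identity $Z_a(-i\,\cdot)$, expanded over the partner of a vertex inside the cluster of level $\beta$, which should produce the factor $n$ from counting cluster vertices.
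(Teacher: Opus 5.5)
Your proof of \eqref{eq:fourholes} is correct. It is the paper's alternating-path exchange, specialized to the hole sets $\{i,j,u,v\}$ and $\varnothing$. The argument for \eqref{eq:strongpath}, however, has two genuine gaps.

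\textbf{The transport step.} This step carries the whole estimate, and it is not established. The superposition you sketch is in fact vacuous. The hole sets $\{v_0,v_{t+1}\}$ and $\{v_t,v_{t+1}\}$ differ only in $v_0,v_t$, so the unique alternating path from $v_t$ ends at $v_0$, and exchanging along it merely swaps the two hole sets. To gain information you must overlay against a \emph{third} hole pair. Your exchange argument extends verbatim to
\[
 g(U_1)g(U_2)\le\sum_y g(U_1\triangle\{x,y\})\,g(U_2\triangle\{x,y\}),\qquad y\in(U_1\triangle U_2)\setminus\{x\}.
\]
Take $U_1=S\ni x$ and $U_2=\{z,p\}$, where $xz$ is a path edge of weight at least $\beta$ and $z\notin S$. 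Every term except $y=z$ has a hole set containing both $x$ and $z$. Your fact~2 and the balance bounds then make each such term $O(1/\beta)$, and
\[
 g(S)\,g(zp)\le g\bigl((S\setminus\{x\})\cup\{z\}\bigr)\,g(xp)+O(1/\beta).
\]
The transported term therefore carries the coefficient $g(xp)/g(zp)$, not $1$. Even for the best $p$, balance bounds this coefficient only by $r_x/r_z\le32$, and iterating it produces exactly the exponential loss you were worried about.

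The missing idea is to normalize. Choose $p$ with $g(zp)=r_z$. Then $p\ne x$, because $g(xz)\le1/\beta<1/8\le r_z$ once $\beta>8$, and $g(xp)\le r_x$. Now track the potential $g(S)/r_x$ instead of $g(S)$. It satisfies $g(S)/r_x\le g((S\setminus\{x\})\cup\{z\})/r_z+O(1/\beta)$. The factors $r$ telescope along the walk, and the balance bounds enter only at its two ends.

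\textbf{The four-hole reduction.} Your route through \eqref{eq:fourholes} fails, because its right-hand side need not be small when $u,v$ lie outside the heavy cluster of $i,j$. The inequality $B_{ij}\ge\min\{B_{iu},B_{uj}\}$ bounds $B_{ij}$ from below, which is the wrong direction for what you need. Here is a counterexample. Take $n=6$ with vertices $i,j,w,u,v,y$, set $\lambda=a$, put $a=\beta\ge1$ on the triangle $\{i,j,w\}$, and put $a=1/2$ on all other edges. Then:
\begin{itemize}
\item $Z_a=(9\beta+3)/4$;
\item $g(iu)=g(jv)=g(iv)=g(ju)=2(\beta+1)/(9\beta+3)$;
\item every $r_k$ lies in $[2/9,2/3]$, and $B_{ij}=\beta$.
\end{itemize}
The right side of \eqref{eq:fourholes} is therefore at least $8/81$. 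Meanwhile $g(ijuv)=2/(9\beta+3)$, and the target bound is $6\cdot10^5/\beta$, so the reduction cannot work for large $\beta$.

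The paper never routes four holes through \eqref{eq:fourholes}. Instead it runs the normalized walk on the four-hole set itself. Start from $(S,x)=(U,i)$ and move $x$ along the heavy $i$-to-$j$ path until the next vertex $z$ lies in $S$. At that point edge insertion gives $g(S)\le g(S\setminus\{x,z\})/\beta\le4/\beta$. In this argument, \eqref{eq:fourholes} is needed only to bound four-hole ratios by $48$ inside the error terms.
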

\begin{proof}
For even hole sets $U_1,U_2$ and $x\in U_1\triangle U_2$, we first show
\begin{equation}\label{eq:hole-exchange}
 g_a(U_1)g_a(U_2)\le
 \sum_{y\in(U_1\triangle U_2)\setminus\{x\}}
 g_a(U_1\triangle\{x,y\})g_a(U_2\triangle\{x,y\}).
\end{equation}
Overlay an ordered pair of matchings missing $U_1,U_2$. The alternating path from $x$ ends at a unique $y\in(U_1\triangle U_2)\setminus\{x\}$. Exchanging their edges along it preserves the product of their weights and gives the displayed hole sets. For fixed $y$, the output path from $x$ recovers the input. Summing over endpoints and dividing by $Z_a^2$ proves~\eqref{eq:hole-exchange}. Taking $U_1=\{i,j,u,v\}$, $U_2=\varnothing$ gives~\eqref{eq:fourholes}, hence four-hole ratios are at most $3\cdot4^2=48$ when every $r_i(a)\le4$.

We also use edge insertion: for an even hole set $C$ and distinct $x,z\notin C$,
\begin{equation}\label{eq:edge-insertion}
 a_{xz}g_a(C\cup\{x,z\})\le g_a(C),
\end{equation}
because adjoining $xz$ injects matchings missing $C\cup\{x,z\}$ into matchings missing $C$.

For~\eqref{eq:strongpath}, write $g=g_a$, $r_i=r_i(a)$. The bounds $4,48$ suffice if $B_{ij}\le8$. Otherwise~\eqref{eq:bottleneck} and $a\ge\lambda$ give a simple $i$-to-$j$ path with edge weights at least $b:=B_{ij}>8$. Starting with $(S,x)=(U,i)$, move $x$ until it meets another original hole. We transfer the bound on $g(S)/r_x$ so intermediate terms cancel on summation. If the next vertex $z$ lies in $S$, \eqref{eq:edge-insertion} gives
\begin{equation}\label{eq:path-stop}
 \frac{g(S)}{r_x}\le
 \frac{g(S\setminus\{x,z\})}{br_x}\le\frac{32}{b}.
\end{equation}
If $z\notin S$, choose $p$ with $g(zp)=r_z$. Since $g(xz)\le1/a_{xz}<1/8\le r_z$, we have $p\ne x$. Apply~\eqref{eq:hole-exchange} with $U_1=S,U_2=\{z,p\}$, then~\eqref{eq:edge-insertion} to terms containing both $x,z$. Writing $S-x+w=(S\setminus\{x\})\cup\{w\}$ gives
\[
 g(S)\le\frac{r_x}{r_z}g(S-x+z)
 +\frac1{br_z}\left[\1_{\{p\notin S\}}g(S-x+p)
 +\sum_{y\in S\setminus\{x,p\}}g(S\setminus\{x,y\})g(py)\right].
\]
The endpoint $z$ gives the first term using $g(xp)\le r_x$. The bracket contains the term for $p$ when $p\notin S$ and those for $y\in S\setminus\{x,p\}$. For four holes the bracket is at most $48+3\cdot4^2=96$, and for two holes at most $4+4=8$. Since $r_x,r_z\ge1/8$,
\begin{equation}\label{eq:path-step}
 \frac{g(S)}{r_x}\le\frac{g(S-x+z)}{r_z}+\frac{6144}{b}.
\end{equation}
By~\eqref{eq:path-stop} and $r_i\le4$, at most $n-2$ steps of~\eqref{eq:path-step} give $g(U)\le4[6144(n-2)+32]/b\le10^5n/b$.
\end{proof}

\begin{samepage}
Following~\cite[proof of Theorem 3.1]{OpenAI}, we obtain the completion bound.

\begin{lemma}\label{lem:inflation}
If $\lambda$ is balanced and $\mu=\lambda+B/D$, with $D=10^8(n+1)^4$, then
\begin{equation}\label{eq:inflation}
 1\le Z_\mu/Z_\lambda\le e^{10^{-3}}<2.
\end{equation}
\end{lemma}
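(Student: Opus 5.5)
The lower bound is immediate from $\mu\ge\lambda$. For the upper bound I interpolate: set $a(t)=\lambda+tB/D$ for $t\in[0,1]$ and track $\log Z_{a(t)}$. Differentiating gives
\[
 \frac{d}{dt}\log Z_{a(t)}=\sum_{ij}\frac{B_{ij}}{D}\,\frac{Z_{a(t)}(-ij)}{Z_{a(t)}}=\frac1D\sum_{ij}B_{ij}\,g_{a(t)}(ij),
\]
since the derivative of $Z_a$ in $a_{ij}$ is $Z_a(-ij)$. If \eqref{eq:strongpath} applied along the whole path, each term would satisfy $B_{ij}g_{a(t)}(ij)\le10^5n$. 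There are fewer than $n^2$ pairs, so the derivative would be at most $10^5n^3/D<10^{-3}$. Integrating over $t\in[0,1]$ then gives $Z_\mu/Z_\lambda\le e^{10^{-3}}$.

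The work is to keep the hypothesis $1/8\le r_i(a(t))\le4$ of \cref{lem:holes} valid for all $t$. I would use a continuity (bootstrap) argument. At $t=0$, balance gives $r_i\in[1/4,1]$, strictly inside $[1/8,4]$. Let $t^*$ be the supremum of the times up to which the hypothesis holds. On $[0,t^*]$ I would show that every $\log g_{a(t)}(ij)$ moves by at most a small constant, say $\log 2$. Then $r_i$ stays in $[1/8,2]$, which is strictly inside the allowed window, so $t^*=1$. Continuity in $t$ holds because every $g$ is a ratio of positive polynomials.

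To bound the drift, differentiate the two-hole ratio:
\[
 \frac{d}{dt}\log g(ij)=\frac1D\sum_{uv}B_{uv}\left[\frac{g(ijuv)}{g(ij)}-g(uv)\right],
\]
where terms with $\{u,v\}$ meeting $\{i,j\}$ contribute $-g(uv)$ only, since $Z(-ij)$ does not depend on such $a_{uv}$. The negative part is bounded by $n^3\cdot10^5/D$ exactly as above. For the positive part I would not divide by $g(ij)$, which could be tiny. Instead I would bound $g(ijuv)$ using \eqref{eq:fourholes}:
\[
 g(ijuv)\le g(ij)g(uv)+g(iu)g(jv)+g(iv)g(ju).
\]
The first term cancels against $-g(uv)$. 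The cross terms need $B_{uv}g(iu)g(jv)\lesssim g(ij)\cdot\mathrm{poly}(n)$, and this is the main obstacle, because \eqref{eq:strongpath} gives only upper bounds and no lower bound on $g(ij)$.

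To avoid that, I would control the increments of $g(ij)$ additively rather than logarithmically. From \eqref{eq:fourholes} and \eqref{eq:strongpath} applied to $U=\{i,j,u,v\}$ with the pair $u,v$, we get $B_{uv}g(ijuv)\le10^5n$. Hence
\[
 \frac{d}{dt}g(ij)\le\frac1D\sum_{uv}B_{uv}\,g(ijuv)\le\frac{10^5n^3}{D},
\]
so each $g(ij)$ increases by at most $10^{-3}$ in total. That keeps $r_i\le1+10^{-3}\le4$. For the lower bound on $r_i$, fix at time $0$ the partner $j$ with $g(ij)\ge1/4$. Its logarithmic derivative is at least $-\frac1D\sum_{uv}B_{uv}g(uv)\ge-10^{-3}$, because every term of its derivative other than $-g(ij)\sum_{uv}B_{uv}g(uv)/D$ is nonnegative. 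So $g(ij)\ge e^{-10^{-3}}/4>1/8$. This closes the bootstrap with strict inequalities, hence $t^*=1$, and the integrated derivative bound proves \eqref{eq:inflation}. The constant $D=10^8(n+1)^4$ leaves ample room for the count of pairs and the factor $10^5n$.
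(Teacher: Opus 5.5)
Your proposal is correct and follows essentially the same route as the paper. Both arguments interpolate $a(t)=\lambda+tB/D$ and bound $(\log Z)'$ and the four-hole terms in $g_t(ij)'$ via \eqref{eq:strongpath}. Both get additive upper control and multiplicative lower control ($g_t(ij)\ge e^{-\eta t}g_0(ij)$) of the two-hole ratios, and close with a continuity bootstrap keeping $r_i\in[\tfrac14e^{-\eta},1+\eta]$ strictly inside $[1/8,4]$. Your exploratory logarithmic attempt, and the initial window $[1/8,2]$ (not strictly interior at the lower end), are superseded by your final argument and can be dropped.
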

\end{samepage}
\begin{proof}
Set $a(t)=\lambda+tB/D$, $Z(t)=Z_{a(t)}$, and $g_t=g_{a(t)}$ for $0\le t\le1$. Differentiating, with sums over unordered edges, gives
\begin{equation*}
 \begin{aligned}
 (\log Z)'&=D^{-1}\sum_{ij}B_{ij}g_t(ij),\\
 g_t(xy)'&=D^{-1}\sum_{ij\text{ disjoint from }xy}B_{ij}g_t(xyij)
 -g_t(xy)(\log Z)'.
 \end{aligned}
\end{equation*}
While $r_i(a(t))\in[1/8,4]$ for every $i$, \cref{lem:holes} bounds each product $B_{ij}g_t(ij)$ or $B_{ij}g_t(xyij)$ by $10^5n$. Since $\binom n2 10^5n/D\le10^{-3}=:\eta$,
\[
 0\le(\log Z)'\le\eta,\qquad -\eta g_t(xy)\le g_t(xy)'\le\eta.
\]
Integration gives $e^{-\eta t}g_0(xy)\le g_t(xy)\le g_0(xy)+\eta t$, hence $\tfrac14e^{-\eta}\le r_i(a(t))\le1+\eta$. If any $r_i$ first reaches $1/8$ or $4$, these bounds still hold by continuity, keeping every $r_i$ strictly inside, a contradiction. The derivative bounds thus hold on $[0,1]$, and integration proves~\eqref{eq:inflation}.
\end{proof}

\subsection{The marked law}
Each edge $e$ can be labeled real for $\lambda_e$ or virtual for $B_e/D$ in $\mu_e$, following~\cite[Proposition 4.2]{OpenAI}. A marked matching $(M,V)$ has virtual edges $V\subseteq M$ and law
\begin{equation}\label{eq:marked-law}
 \Pi_\lambda(M,V)=\frac1{Z_\mu}
 \prod_{e\in M\setminus V}\lambda_e\prod_{e\in V}\frac{B_e}{D}.
\end{equation}
Summing over $V$ gives $\pi_\mu$. Balance ensures that the all-real event $V=\varnothing$ has probability $Z_\lambda/Z_\mu\ge1/2$, by \cref{lem:inflation}. Sampling needs no balance: combine \cref{thm:structural-sampler} with the marking rule of~\cite[Proposition 4.2]{OpenAI}.

\begin{theorem}\label{thm:marked-sampler}
Given even $n\ge4$, positive rational $\lambda$ with $\lambda_e\le4^Q$, and integers $Q,p\ge1$, an algorithm returns a marked matching with law within $2^{-p}$ of $\Pi_\lambda$ in total variation. Bit running time and fair-random-bit use are polynomial in $n,Q,\bit(\lambda),p$ on every run, regardless of balance.
\end{theorem}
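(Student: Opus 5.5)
The plan is to reduce to \cref{thm:structural-sampler} applied to suitably rescaled completed activities, then mark edges independently. Note that \cref{lem:inflation} and balance are not needed, since the statement is about sampling only.

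\textbf{Step 1: compute the completion.} First compute $B$ by the $O(n^3)$ max--min recursion after~\eqref{eq:bottleneck}. Then set $D=10^8(n+1)^4$ and $\mu=\lambda+B/D$. Every finite entry of $B$ is $1$ or an input activity, so $\bit(\mu)$ is polynomial in $n$ and $\bit(\lambda)$. From $B\ge1$ and $B\le\max(1,\max_e\lambda_e)\le4^Q$ we get
\[
 D^{-1}\le B/D\le\mu\le 4^Q+4^Q/D\le 2\cdot4^Q.
\]
With $B_0:=B/D$, which is still a positive ultrametric similarity, the completion bound $B/D\le\mu\le(D+1)B/D$ gives $B_0\le\mu\le(D+1)B_0$. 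Thus $\mu$ satisfies~\eqref{eq:structural-bounds} with this $D$ and the given $Q$. Since $D$ is polynomial in $n$, \cref{thm:structural-sampler} with $b=p+1$ returns a matching $M$ whose law is within $2^{-p-1}$ of $\pi_\mu$. Its cost is polynomial in $n,Q,\bit(\lambda),p$.

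\textbf{Step 2: mark the edges.} Given $M$, mark each $e\in M$ independently as virtual with probability
\[
 \beta_e=\frac{B_e/D}{\lambda_e+B_e/D},
\]
and as real otherwise. The resulting law is $\pi_\mu(M)\prod_{e\in V}\beta_e\prod_{e\in M\setminus V}(1-\beta_e)$. Substituting $\pi_\mu(M)=Z_\mu^{-1}\prod_{e\in M}\mu_e$ makes each factor $\mu_e$ cancel, leaving exactly~\eqref{eq:marked-law}. The marking is a fixed Markov kernel applied to $M$, so it cannot increase total variation distance. The output is therefore within $2^{-p-1}$ of $\Pi_\lambda$ if the marking coins are exact.

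\textbf{Step 3 (main obstacle): finite precision.} The biased coins $\beta_e$ are rationals with polynomial-size encodings, but they must be generated from fair bits within a fixed budget on every run. I would truncate each coin to $p+\lceil\log_2 n\rceil+2$ bits. This amounts to comparing a uniform dyadic number against $\beta_e$, with a deterministic fallback on the tie-interval. It gives per-edge error at most $2^{-p-2}/n$. The $n/2$ edges then contribute total error below $2^{-p-1}$, for a total distance below $2^{-p}$. The only real check is that the arithmetic on $B_e/D$ and $\lambda_e$ stays polynomial in $\bit(\lambda)$ and $n$, which holds because $B$'s entries are input values or $1$. The remaining bookkeeping can follow \cref{app:implementation}.
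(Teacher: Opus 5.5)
Your proposal is correct and matches the paper's proof: you apply \cref{thm:structural-sampler} to the completion $\mu=\lambda+B/D$ with $B_0=B/D$, then mark each sampled edge virtual with probability $B_e/(D\mu_e)$, and the factors $\mu_e$ cancel to give~\eqref{eq:marked-law}. The only difference is in the rounding bookkeeping. The paper folds the $n/2$ marking coins into the single rounding budget of \cref{app:simulation}, whereas you call the sampler as a black box with $b=p+1$ and give the truncated coins their own $2^{-p-3}$ budget; either way the total stays below $2^{-p}$.
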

\begin{proof}
Since $1\le B_e\le4^Q$ and $\lambda\le B$, the completion satisfies
\begin{equation*}
 D^{-1}\le\mu_e\le2\cdot4^Q,\qquad B/D\le\mu\le(D+1)B/D.
\end{equation*}
Apply \cref{thm:structural-sampler} to $\mu$ and independently mark each sampled edge $e$ virtual with probability $B_e/(D\mu_e)$. Internal scaling preserves $\pi_{D\mu}=\pi_\mu$, while marking uses the original weights. An exact sample gives~\eqref{eq:marked-law}. The rounding budget in \cref{app:simulation} includes the $n/2$ marking coins. Mixing and total rounding errors each at most $2^{-p-1}$ prove the claim.
\end{proof}

For balance in \cref{sec:counting}, use \emph{vertex scaling} $a_{ij}\mapsto a_{ij}s_i s_j$, with $s_i>0$. Every full matching weight gains $\prod_i s_i$, preserving its law and dividing $g_a(ij)$ by $s_i s_j$.

\section{Counting with one marked batch}\label{sec:counting}
We prove \cref{thm:main} using OpenAI's marked annealing method~\cite[Section 8]{OpenAI}. Each stage uses one batch to estimate the next partition-function ratio and restore balance.

\subsection{Annealing and marked samples}
For $n=0$, return one. For $n>2|E(G)|$, return zero. Otherwise, polynomial time in $n$ is polynomial in input length, even for binary $n$. Use Edmonds' algorithm~\cite{Edmonds} to test for a perfect matching, returning zero if none exists. Handle $n=2$ exactly. Henceforth $n\ge4$ is even and $Z(G)\ge1$.

Let $b=1+1/n$. The outer schedule assigns weights $w_e^{(j)}=1$ to graph edges and $b^{-j}$ to nonedges, with $Z_j=Z_{w^{(j)}}$. To suppress nonedges, take the least $\Jout\ge1$ satisfying
\begin{equation}\label{eq:annealing-length}
 n^{n/2}b^{-\Jout}\le\varepsilon/32.
\end{equation}
Since $\log b\ge1/(2n)$, we have $\Jout=O(n(n\log n+\log(1/\varepsilon)))$. The weight multiplier for a full or two-hole matching $H$ and the full partition-function ratio are
\begin{equation*}
 u(H)=b^{-|H\setminus E(G)|}\in[1/2,1],\qquad
 R_j=Z_{j+1}/Z_j\in[1/2,1].
\end{equation*}
These bounds follow from $b^{n/2}\le e^{1/2}<2$, and $Z_0=(n-1)!!$.

At outer stage $j$, scale the weights to $\lambda_{uv}=w_{uv}^{(j)}s_us_v$ with $s_i>0$. Sampling uses their completion $\mu=\lambda+B/D$ and the internal levels of \cref{sec:sampling}. Set $\lambda^+_{uv}=w_{uv}^{(j+1)}s_us_v$, advancing the outer schedule before rebalancing. Vertex scaling gives $\pi_\lambda=\pi_{w^{(j)}}$ and $Z_{\lambda^+}/Z_\lambda=R_j$. Initialize $s_i=1/f$, where $f=\lfloor\sqrt{n-1}\rfloor$, so initially every two-hole ratio is $f^2/(n-1)\in[1/4,1]$. The balance condition is
\begin{equation}\label{eq:invariant}
 \tfrac14\le r_i(\lambda)\le1\qquad\text{for every }i.
\end{equation}
Full and two-hole partition sums both change by factors in $[1/2,1]$. Under~\eqref{eq:invariant}, this gives
\begin{equation}\label{eq:prepass}
 \tfrac12g_\lambda(e)\le g_{\lambda^+}(e)\le2g_\lambda(e),
 \qquad \tfrac18\le r_i(\lambda^+)\le2.
\end{equation}
To estimate the next two-hole ratios, record from each marked sample $(M,V)$ at $\lambda$
\begin{equation*}
 a=\1_{\{V=\varnothing\}},\qquad d=a\,u(M),\qquad
 z_e=B_e^{-1}\1_{\{V=\{e\}\}}u(M\setminus\{e\}).
\end{equation*}
Here $a$ records all-real samples and $d$ reweights them to $\lambda^+$. Removing the unique virtual edge $e$ gives a bijection to perfect matchings after deleting its endpoints (written $-e$). In $z_e$, $B_e^{-1}$ leaves the factor $1/D$ from~\eqref{eq:marked-law}, and $u$ reweights the remaining matching. All three observables lie in $[0,1]$ since $B_e\ge1$, with exact means
\begin{equation}\label{eq:marked-means}
 \E a=\frac{Z_\lambda}{Z_\mu},\qquad
 \E d=\frac{Z_{\lambda^+}}{Z_\mu},\qquad
 \E z_e=\frac{Z_{\lambda^+}(-e)}{DZ_\mu}.
\end{equation}
Taking ratios yields
\begin{equation*}
 R_j=\frac{\E d}{\E a},\qquad
 g_{\lambda^+}(e)=D\frac{\E z_e}{\E d}.
\end{equation*}
By \cref{lem:inflation}, balance gives $\E a\ge1/2$ and $\E d\ge1/4$.

\subsection{Estimation and a dyadic balancing pass}
For $N_{\mathrm{obs}}=\binom n2+2$ observables and failure probability $\delta/\Jout$ per stage, set
\begin{equation}\label{eq:batch-parameters}
 \begin{aligned}
 \alpha&=2^{-12},&
 \sigma&=\min\left\{\frac{\alpha}{100D},
                         \frac{\varepsilon}{1000\Jout}\right\},\\
 L_\delta&=\left\lceil\log_2\frac{2\Jout N_{\mathrm{obs}}}{\delta}\right\rceil,&
 S&=\left\lceil2L_\delta\sigma^{-2}\right\rceil,\qquad
 p=\left\lceil\log_2\frac2\sigma\right\rceil.
 \end{aligned}
\end{equation}
Here $\sigma$ controls the ratios and final product, $S$ the failure probability, and $p$ the sampling bias. Use $Q=\Jout$ in all calls, as justified in \cref{app:countbits}.

At each stage $j=0,\ldots,\Jout-1$, form empirical means $\bar a,\bar d,\bar z_e$ from $S$ fresh independent marked samples at $\lambda$. Floors and clipping keep updates defined even when estimates fail:
\begin{equation*}
 \begin{gathered}
 a_* =\max(1/4,\bar a),\qquad d_* =\max(1/8,\bar d),\\
 \widehat R_j=\operatorname{clip}_{[1/2,1]}(\bar d/a_*),\qquad
 G_e=\min\{3,\alpha+D\bar z_e/d_*\},
 \end{gathered}
\end{equation*}
where $\operatorname{clip}_{[l,u]}(x)=\max(l,\min(u,x))$. Always $\widehat R_j\in[1/2,1]$ and $\alpha\le G_e\le3$.

To restore balance, adapt OpenAI's pass~\cite[Lemmas 8.1--8.2]{OpenAI} by rounding scale multipliers upward to powers of two. Start $v_i=2$. Visit the vertices once in a fixed order, replacing $v_i$ by the smallest member of
\begin{equation}\label{eq:grid-constraint}
 \mathcal H=\{2^{-13},2^{-12},\ldots,2^1\}
 \quad\text{such that}\quad v_iv_k\ge G_{ik}\quad(k\ne i).
\end{equation}
Then replace $s_i$ by $s_iv_i$ and advance to stage $j+1$.

Suppose $\lambda$ is balanced and each empirical mean differs from its exact mean in~\eqref{eq:marked-means} by at most $\sigma$. The denominator floors are inactive. Writing $g_e=g_{\lambda^+}(e)\le2$, we obtain
\begin{equation*}
 \left|\frac{D\bar z_e}{\bar d}-g_e\right|
 \le\frac{(D+g_e)\sigma}{\E d-\sigma}
 \le\frac{(D+2)\sigma}{1/4-\sigma}<\alpha.
\end{equation*}
The cap at three is inactive, and adding $\alpha$ makes $G_e$ an upper bound:
\begin{equation}\label{eq:data-sandwich}
 g_e\le G_e\le g_e+2\alpha.
\end{equation}
Clipping the partition-function ratio cannot increase its error relative to $R_j\in[1/2,1]$, so
\begin{equation}\label{eq:ratio-error}
 \left|\frac{\widehat R_j}{R_j}-1\right|
 \le\frac{2\sigma}{R_j(\E a-\sigma)}
 \le\frac{4\sigma}{1/2-\sigma}<10\sigma
 \le\frac{\varepsilon}{100\Jout}.
\end{equation}

\begin{lemma}\label{lem:balance-pass}
The pass is defined for all symmetric data $\alpha\le G_e\le3$. If~\eqref{eq:prepass} and~\eqref{eq:data-sandwich} hold, it restores~\eqref{eq:invariant}.
\end{lemma}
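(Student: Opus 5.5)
The plan has three parts: show the pass is well defined and only lowers multipliers, derive the upper bound in~\eqref{eq:invariant} from the constraints~\eqref{eq:grid-constraint}, and derive the lower bound from the minimality of each choice. After the pass, the new activities are $\lambda^+_{uv}v_uv_v$. Vertex scaling, as noted before \cref{sec:counting}, then gives the new two-hole ratios
\[
 g'(ik)=\frac{g_{\lambda^+}(ik)}{v_iv_k}.
\]

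\emph{Definedness and monotonicity.} I would show by induction on the visit order that, when vertex $i$ is visited, the top grid value $2$ is admissible. Any $k$ not yet visited still has $v_k=2$, and $2\cdot2=4\ge3\ge G_{ik}$. Any $k$ visited earlier chose $v_k$ while $v_i$ was still $2$, so $2v_k\ge G_{ki}=G_{ik}$. Hence the admissible set is a nonempty subset of the finite grid $\mathcal H$, and its least element exists. This uses only $\alpha\le G_e\le3$ and symmetry. Since $2$ is always admissible, each $v_i$ only decreases from $2$.

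\emph{Upper bound.} The choice of $v_i$ enforces $v_iv_k\ge G_{ik}$ against the current $v_k$ for every $k\ne i$. A later lowering of $v_k$ is chosen subject to the same constraint against the then-final $v_i$. Thus all constraints hold at the end, and~\eqref{eq:data-sandwich} gives $g'(ik)\le g_{\lambda^+}(ik)/G_{ik}\le1$. Hence $r_i\le1$ for every $i$.

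\emph{Lower bound.} Fix $i$, and use~\eqref{eq:prepass} to pick $j^*$ with $g_{\lambda^+}(ij^*)\ge1/8$. If $v_i=2^{-13}$, then $v_iv_{j^*}\le2^{-12}$, so $g'(ij^*)\ge2^{9}\ge1/4$. Otherwise $v_i/2\in\mathcal H$ was inadmissible when $i$ was visited. So some $k$ had $(v_i/2)v_k^{\mathrm{then}}<G_{ik}$. Since multipliers only decrease, the final values satisfy $v_iv_k<2G_{ik}\le2g_{\lambda^+}(ik)+4\alpha$, which gives
\[
 g'(ik)>\frac{g_{\lambda^+}(ik)}{2g_{\lambda^+}(ik)+4\alpha}.
\]
This is close to $1/2$, and at least $1/4$, once $g_{\lambda^+}(ik)\ge2\alpha$.

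The main obstacle is the case where the blocking $k$ has tiny $g_{\lambda^+}(ik)$, of order $\alpha$. There the blocking edge is not itself a good witness for $r_i$. I would first handle it by transferring the bound to $j^*$. From $v_iv_k<2G_{ik}\le6\alpha$ and $v_k\ge2^{-13}$ we get an explicit small upper bound on $v_i$. Then $v_iv_{j^*}\le2v_i$ and $g_{\lambda^+}(ij^*)\ge1/8$ should force $g'(ij^*)\ge1/4$ whenever $v_i\le1/4$. If this numeric transfer falls short because $v_k$ is too small, I would instead examine when $k$ itself was visited and use its own blocking constraint. The grid range $2^{-13}$ to $2^1$ and the choice $\alpha=2^{-12}$ appear designed to make this case analysis close.
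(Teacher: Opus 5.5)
Your definedness and upper-bound arguments are correct, and your lower-bound setup matches the paper's. A vertex $k$ blocking $v_i/2$ at $i$'s visit (the paper uses the maximizer $k_i$ of $G_{ik}/v_k$) gives $v_iv_k<2G_{ik}$ for the final multipliers. The gap is the case you yourself flag as the main obstacle, which the proposal does not close. Your transfer step fails numerically. From $v_iv_k<2G_{ik}\le6\alpha$ and only $v_k\ge2^{-13}$ you get $v_i<6\alpha\cdot2^{13}=12$, which is vacuous since $v_i\le2$, not $v_i\le1/4$. Your fallback of using $k$'s own blocking constraint points the wrong way. A blocking constraint bounds $v_k$ from above, whereas what you need is a lower bound on $v_k$.

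The missing idea is to apply the bound $r_k(\lambda^+)\ge1/8$ from \eqref{eq:prepass} at every vertex, not only at $i$. When $k$ is visited, pick $j$ with $g_{\lambda^+}(kj)=r_k(\lambda^+)\ge1/8$. Admissibility, $G_{kj}\ge g_{\lambda^+}(kj)$ from \eqref{eq:data-sandwich}, and $v_j\le2$ give $v_k\ge G_{kj}/v_j\ge1/16$, and $v_k$ never changes afterwards. So every final product satisfies $v_iv_k\ge1/256$. Your blocking inequality then gives $G_{ik}>2^{-9}$, hence $g_{\lambda^+}(ik)\ge G_{ik}-2\alpha>3\cdot2^{-11}>2\alpha$. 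Consequently:
\begin{itemize}
\item the obstacle case is empty;
\item your case $v_i=2^{-13}$ cannot occur;
\item your bound $g'(ik)>g_{\lambda^+}(ik)/(2g_{\lambda^+}(ik)+4\alpha)$ already yields $g'(ik)>3/8$.
\end{itemize}
This is exactly the paper's estimate $g_{\lambda^+}(ik_i)/(v_iv_{k_i})>1/2-512\alpha=3/8$.
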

\begin{proof}
Initially $v_iv_k=4\ge G_{ik}$. The current $v_i$ is feasible, so updates only decrease multipliers and preserve the constraints. Vertex scaling divides each true ratio by $v_iv_k$, giving the upper bound one under~\eqref{eq:data-sandwich}.

For the lower bound, at vertex $i$ let $m_i=\max_{k\ne i}G_{ik}/v_k\ge\alpha/2=2^{-13}$, attained at $k_i$. Minimality gives $m_i\le v_i<2m_i$, hence $G_{ik_i}/(v_iv_{k_i})>1/2$. Later decreases cannot reduce this ratio. Under~\eqref{eq:prepass} and~\eqref{eq:data-sandwich}, $m_i\ge r_i(\lambda^+)/2\ge1/16$, so final products are at least $1/256$. The additive error $2\alpha$ on $ik_i$ therefore gives
\[
 \max_{k\ne i}\frac{g_{\lambda^+}(ik)}{v_iv_k}
 >\tfrac12-512\alpha=\tfrac38>\tfrac14.\qedhere
\]
\end{proof}

\subsection{Accuracy and termination}
Since the weights depend on earlier samples, condition on the history before each stage, fixing the activities and observables. By \cref{thm:marked-sampler}, independent calls have law within $2^{-p}$ of $\Pi_\lambda$, so each observable $X\in[0,1]$ has mean bias at most $2^{-p}\le\sigma/2$. Hoeffding's inequality~\cite{Hoeffding}, applied around the actual means, gives
\[
 \Pp\bigl(|\bar X-\E_{\Pi_\lambda}X|>\sigma
       \mid\text{history}\bigr)
 \le2e^{-S\sigma^2/2}.
\]
A union bound over the observables, allowing dependence within a sample, yields
\begin{equation*}
 \Pp(\text{some empirical error exceeds }\sigma\mid\text{history})
 \le2N_{\mathrm{obs}}e^{-S\sigma^2/2}
 \le2N_{\mathrm{obs}}2^{-L_\delta}
 \le\frac{\delta}{\Jout}.
\end{equation*}
Averaging over histories and summing over stages gives simultaneous empirical accuracy with probability at least $1-\delta$. On this event, \cref{lem:balance-pass} propagates balance by induction, and~\eqref{eq:ratio-error} controls every ratio. We return
\begin{equation*}
 \widehat Z=(n-1)!!\prod_{j=0}^{\Jout-1}\widehat R_j.
\end{equation*}
On the same event, multiplying~\eqref{eq:ratio-error} over stages gives
\begin{equation*}
 (1-\varepsilon/100)Z_{\Jout}\le\widehat Z
 \le e^{\varepsilon/100}Z_{\Jout}.
\end{equation*}

At most $n^{n/2}$ matchings use a nonedge, each with final weight at most $b^{-\Jout}$. Hence
\begin{equation*}
 Z(G)\le Z_{\Jout}\le Z(G)+\varepsilon/32
 \le(1+\varepsilon/32)Z(G).
\end{equation*}
For $\varepsilon<1$, $e^{\varepsilon/100}\le1+\varepsilon/50$, so the combined error is within $1\pm\varepsilon$. The deterministic polynomial cost bound in \cref{app:countbits} also holds when estimates fail.

\appendix
\section{Finite precision}\label[appendix]{app:implementation}
\subsection{Simulation of the replicated chain}\label{app:simulation}
There are at most $n^{n/2}$ perfect matchings, and the capped activities in~\eqref{eq:capped-path} lie in $[1,2D4^Q]$. Thus
$\pi_t(M)\ge[n^{n/2}(2D\cdot4^Q)^{n/2}]^{-1}$. With the parameters $T,q,J$ of \cref{sec:sampling}, set
\begin{equation*}
 \begin{aligned}
 \Lambda&=nq(T+1)(\lceil\log_2(2nD)\rceil+2Q+2),\\
 s(\xi)&=\left\lceil\frac{8J}{3}
 (\Lambda+\lceil\log_2(1/\xi)\rceil+1)\right\rceil
 \qquad(0<\xi<1).
 \end{aligned}
\end{equation*}
Here $\log(1/\min\nu)\le\Lambda$. For any initial $z$, $L^2$ contraction and Cauchy--Schwarz give
\begin{equation*}
 \norm{P^{s(\xi)}(z,\cdot)-\nu}_{\TV}
 \le\tfrac12\left(1-\tfrac3{8J}\right)^{s(\xi)}\sqrt{1/\nu(z)-1}\le\xi.
\end{equation*}
The same bound holds for the top-level output.

Precompute $c^t$ and $a^{(t)}$. Decode update indices by integer division, compute Metropolis ratios from edge products (partition functions cancel), and list discrepancy cycles when choosing a cycle exchange. A uniform refresh takes $n/2$ partner choices. By~\eqref{eq:adjacent-laws}, all exponents and arithmetic lengths are polynomial in $n,Q,\log D,\bit(a)$.

For a rational law $p_1,\ldots,p_h$, use $k$ fair bits to choose $Z$ uniformly in $\{0,\ldots,2^k-1\}$ and select the half-open cumulative interval containing $Z/2^k$. The total variation error is at most $2h2^{-k}$. A run, including $n/2$ final marking coins when used, has at most
\[
 R_{\mathrm{draw}}=100(s(\xi)+1)(n+1),\qquad
 R_{\mathrm{opt}}=10(J+n^2+1)
\]
choices and options per choice, respectively. Take the least $k$ with $2^k\ge2R_{\mathrm{draw}}R_{\mathrm{opt}}/\xi$. Conditional coupling over these choices bounds the total rounding error by $\xi$. Exact marking cannot increase total variation distance, so the unmarked or marked output has error at most $2\xi$ and uses at most $kR_{\mathrm{draw}}$ fair bits. With $\xi=2^{-b-1}$ or $2^{-p-1}$, this proves \cref{thm:structural-sampler,thm:marked-sampler}, respectively. Marking probabilities have polynomial bit length.

\subsection{The adaptive counter}\label{app:countbits}
Each multiplier in~\eqref{eq:grid-constraint} adds an integer in $[-13,1]$ to the scale exponent. Thus, after $j$ passes,
\begin{equation*}
 s_i=2^{h_i}/f,\qquad -13j\le h_i\le j,
 \qquad f=\lfloor\sqrt{n-1}\rfloor,
\end{equation*}
on every history. Activities are positive and at most $4^j/f^2\le4^{\Jout}$, validating $Q=\Jout$. $w_e^{(j)}$ and $2^{h_i}/f$ have $O(\Jout\log(n+1))$ bits, as does $B_e^{-1}$, since $B_e$ is one or an input activity.

Within a stage, the $u(H)$ share denominator $(n+1)^{n/2}$, and for fixed $e$, $z_e$ is zero or $B_e^{-1}u(H)$. Averaging adds $O(\log S)$ denominator bits, so means and floored ratios have polynomial bit length. The data only choose dyadic multipliers, so means' denominators never enter cumulative scales. In the final ratio product, the factors' bit lengths add.

There are exactly $\Jout S$ marked sampler calls and $n\Jout$ updates over the finite grid~\eqref{eq:grid-constraint}. The parameters in~\eqref{eq:annealing-length} and~\eqref{eq:batch-parameters}, and $f$, are computable by rational arithmetic, integer power comparisons and integer square root. In particular,
\[
 \sigma^{-1}=O(D+\Jout\varepsilon^{-1}),\qquad S=O\!\left(\sigma^{-2}[\log(n\Jout)+\log(1/\delta)]\right),\qquad p=O(\log(1/\sigma)).
\]
With $D=10^8(n+1)^4$, these bounds are polynomial in $n,\varepsilon^{-1},\log(1/\delta)$ and the input encodings. Summing the deterministic time and random-bit budgets of \cref{thm:marked-sampler} over the fixed calls proves the worst-case bounds, without requiring balance on failed histories.

\end{document}